\newtheorem{theorem}{Theorem}
\newtheorem{lemma}[theorem]{Lemma}
\title{Bounded-Degree Spanners in the Presence of Polygonal Obstacles}
\author[1]{Andr\'e van Renssen\thanks{Email: andre.vanrenssen@sydney.edu.au}}
\author[1]{Gladys Wong\thanks{Email: gwon6099@uni.sydney.edu.au}}
\affil[1]{University of Sydney, Australia}
\date{}
\newcommand{\etal}{\emph{et al.}\xspace}
\newcommand{\graphname}{$G_\infty$\xspace}
\newcommand{\longname}{polygon-constrained half-$\Theta_6$-graph\xspace}
\begin{document}

\maketitle

\begin{abstract}
Let $V$ be a finite set of vertices in the plane and $S$ be a finite set of polygonal obstacles, where the vertices of $S$ are in $V$. We show how to construct a plane $2$-spanner of the visibility graph of $V$ with respect to $S$. As this graph can have unbounded degree, we modify it in three easy-to-follow steps, in order to bound the degree to $7$ at the cost of slightly increasing the spanning ratio to 6. 

\end{abstract}

\section{Introduction}
\label{sec:introduction}
A geometric graph $G$ consists of a finite set of vertices $V\in\mathbb{R}^2$ and a finite set of edges $(p,q)\in E$ such that the endpoints $p, q \in V$. Every edge in $E$ is weighted according to the Euclidean distance, $|pq|$, between its endpoints. For any two vertices $x$ and $y$ in $G$, their distance, $d_G(x,y)$ or $d(x,y)$ if the graph G is clear from the context, is defined as the sum of the Euclidean distance of each constituent edge in the shortest path between $x$ and $y$. A $t$-spanner $H$ of $G$ is a subgraph of $G$ where for all pairs of vertices in $G$, $d_H(x,y) \leq t \cdot d_G(x,y)$. The smallest $t\geq1$ for which this property holds, is called the \emph{stretch factor} or \emph{spanning ratio} of $H$. For a comprehensive overview on spanners, see Bose and Smid's survey \cite{BOSE2013818} and Narasimhan and Smid's book \cite{Narasimhan:2007:GSN:1208237}. Since spanners are subgraphs where all original paths are preserved up to a factor of $t$, these graphs have applications in the context of geometric problems, including motion planning and optimizing network costs and delays. 

Another important factor considered when designing spanners is its maximum degree. If a spanner has a low maximum degree, each node needs to store only a few edges, making the spanner better suited for practical purposes. The best degree bound for plane spanners is 4 by Bonichon~\etal~\cite{Bonichon2015}, whose spanner had a spanning ratio of 156.82. This result was improved by Kanj~\etal~\cite{DBLP:journals/corr/KanjPT16}, who reduced the spanning ratio to 20. Bose~\etal~\cite{DBLP:journals/corr/BiniazBCGMS16} showed that degree 3 can be achieved in two special cases. In terms of lower bounds, Dumitrescu and Ghosh~\cite{dumitrescu2016lower} showed that there exist point sets that require a spanning ratio of at least 1.4308. They also strengthened this bound to 2.1755 for spanners of degree 4 and 2.7321 for spanners of degree 3. 

Most research has focussed on designing spanners of the complete graph. This implicitly assumes that every edge can be used to construct the spanner. However, unfortunately, in many applications this is not the case. In motion planning we need to move around physical obstacles and in network design some connections may not be useable due to an area of high interference between the endpoints that corrupts the messages. This naturally gives rise to the concept of \emph{obstacles} or \emph{constraints}. Spanners have been studied for the case where these obstacles form a plane set of line segments. It was shown that a number of graphs that are spanners without obstacles remain spanners in this setting~\cite{BCR2018GeneralizedDelaunayJournal,basepaper,BK06,BR2019Constrained}. In this paper we consider more complex obstacles, namely simple polygons. 

Let $S$ be a finite set of simple polygonal obstacles where each corner of each obstacle is a vertex in $V$, such that no two obstacles intersect. Throughout this paper, we assume that each vertex is part of at most one polygonal obstacle and occurs at most once along its boundary, i.e., the obstacles are vertex-disjoint simple polygons. Note that $V$ can also contain vertices that do not lie on the corners of the obstacles. Two vertices are \emph{visible} to each other if and only if the line segment connecting them does not properly intersect any obstacles (i.e., the line segment is allowed to touch obstacles at vertices or coincide with its boundary, but it is not allowed to intersect the interior). The line segment between two visible points is called a \emph{visibility edge}. The \emph{visibility graph} of a given point set $V$ and a given set of polygonal obstacles $S$, denoted $Vis(V,S)$, is the complete graph on $V$ excluding all the edges that properly intersect some obstacle (see Figure~\ref{fig:vis}). It is a well-known fact that the visibility graph is connected. 

	\begin{figure}[h!]
		\centering
		\includegraphics{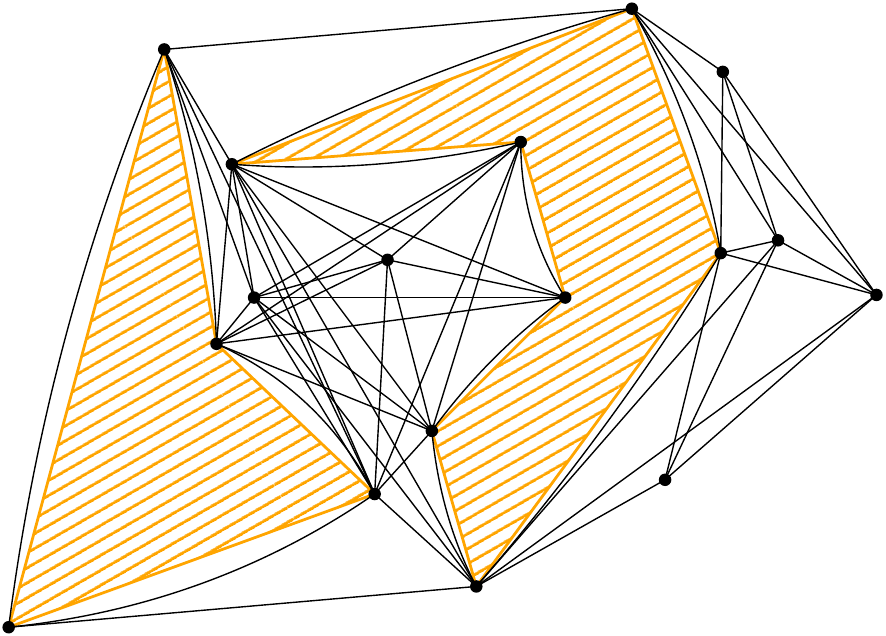}
		\caption{An example of a visibility graph. Edges coinciding with obstacle boundaries are drawn as slight arcs for clarity.}
		\label{fig:vis}
	\end{figure}

Clarkson~\cite{Cl87} was one of the first to study this problem, showing how to construct a $(1+\epsilon)$-spanner of $Vis(V,S)$. Modifying this result, Das~\cite{D97} showed that it is also possible to construct a spanner of constant spanning ratio and constant degree. Recently, Bose~\etal~\cite{basepaper} constructed a 6-spanner of degree $6+c$, where $c$ is the number of line segment obstacles incident to a vertex. In the process, they also show how to construct a 2-spanner of the visibility graph. We generalize these results and construct a 6-spanner of degree at most 7 in the presence of polygonal obstacles, simplifying some of the proofs in the process. Leading up to this main result, we first construct the \longname, denoted $G_\infty$ (defined in the next section), a 2-spanner of the visibility graph of unbounded degree. We modify this graph in a sequence of three steps, each giving a plane 6-spanner of the visibility graph to bound the degree to 15, 10, and finally 7. 

Each of these graphs may be of independent interest. Specifically, the graphs with degree 10 and 15 are constructed by solely removing edges from $G_\infty$. Furthermore, in the graph of degree 15 whether an edge of $G_\infty$ is kept can be determined by its endpoints, whereas for the other two this involves their neighbors. Hence, depending on the network model and communication cost, one graph may be more easily applicable than another.

\section{Preliminaries}
\label{sec:prelim}
Throughout this paper, we assume that each vertex is part of at most one polygonal obstacle and occurs at most once along its boundary, i.e., we have vertex-disjoint simple polygons. We note that we can always duplicate the vertex and possibly split the polygon, depending on whether we allow a path to go through this vertex in order to reach the opposite side of the polygon. An example of this is shown in Figure~\ref{fig:fish}. This operation does not affect the spanning ratio of the resulting graph, and the effect on the degree bound is minor: using Lemma~5 from~\cite{basepaper}, it can be shown that the degree of a vertex $v$ of the original graph is 6 plus half the number of polygon edges incident on $v$. 

\begin{figure}[h!]
	\centering
	\includegraphics{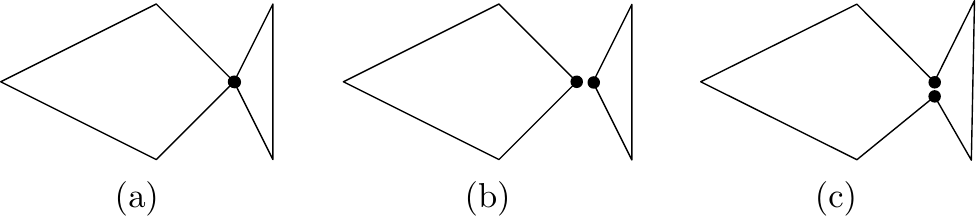}
	\caption{(a) Two polygons share a vertex. (b) Modified input when paths are allowed to pass between the polygons. (c) Modified input when paths are not allowed to pass between the polygons.}
	\label{fig:fish}
\end{figure}

Next, we describe how to construct the \longname, $G_\infty$. Before we can construct this graph, we first partition the plane around for each vertex $u \in V$ into six cones, each with angle $\theta=\frac{\pi}{3}$ and $u$ as the apex. For ease of exposition, we assume that the bisector of one cone is a vertical ray going up from $u$. We refer to this cone as $C^u_0$ or $C_0$ when the apex $u$ is clear from the context. The cones are then numbered in counter-clockwise order $(C_0, \overline{C_2}, C_1, \overline{C_0}, C_2, \overline{C_1})$ (see Figure~\ref{fig:cones}). Cones of the form $C_i$ are called \emph{positive cones}, whereas  cones of the form $\overline{C^u_i}$ are \emph{negative cones}. For any two vertices $u$ and $v$ in $V$, we have the property that if $v\in C_i^u$ then $u\in\overline{C^v_i}$. 

\begin{figure}[ht]
  \begin{minipage}[t]{0.47\linewidth}
    \begin{center}
      \includegraphics{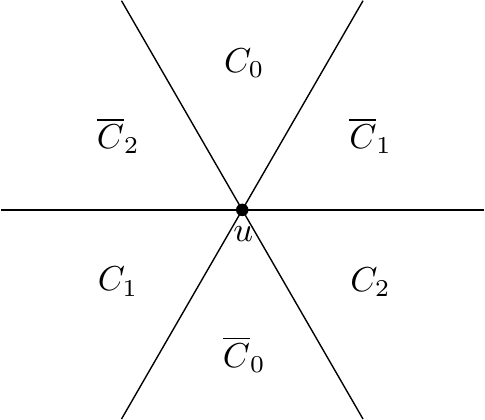}
    \end{center}
    \caption{The cones around $u$.}
    \label{fig:cones}
  \end{minipage}
  \hspace{0.05\linewidth}
  \begin{minipage}[t]{0.47\linewidth}
    \begin{center}
      \includegraphics{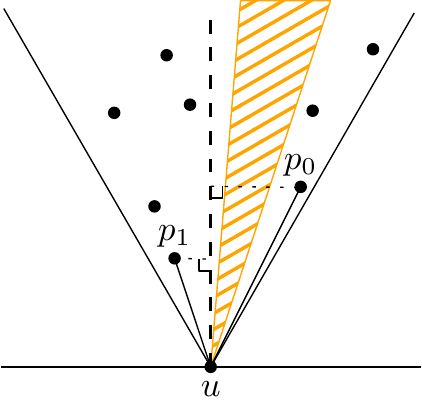}
    \end{center}
    \caption{The positive cone $C^u_i$ is split into two subcones and an edge is added in each.}
    \label{fig:subcones}
  \end{minipage}
\end{figure}

We are now ready to construct $G_\infty$. For every vertex $u \in V$, we consider each of its positive cones and add a single edge in such a cone. More specifically, we consider all vertices visible to $u$ that lie in this cone and add an undirected edge to the vertex whose projection on the bisector of the cone is closest to $u$. More precisely, the edge $(u,v)$ is part of $G_\infty$ when $v$ is visible to $u$ and $|u v'| < |u w'|$ for all $v \neq w$, where $v'$ and $w'$ are the projections of $v$ and $w$ on the bisector of the currently considered positive cone of $u$. For ease of exposition, we assume that no two vertices lie on a line parallel to one of the cone boundaries and no three vertices are collinear. This ensures that each vertex lies in a unique cone of each other vertex and their projected distances are distinct. If a point set is not in general position, one can rotate it by a small angle such that the resulting point set is in general position. 

Since every vertex is part of at most one obstacle, obstacles can affect the construction in only a limited number of ways. Cones that are split in two (i.e., there are visible vertices on both sides of the obstacle) are considered to be two subcones of the original cone and we add an edge in each of the two subcones using the original bisector (see Figure~\ref{fig:subcones}). If a cone is not split, the obstacle only changes the region of the cone that is visible from $u$. Since we only consider visible vertices when adding edges, this is already handled by the construction method. 

We note that the construction described above is similar to that of the constrained half-$\Theta_6$-graph as defined by Bose~\etal~\cite{basepaper} for line segment constraints. In their setting a cone can be split into multiple subcones and in each of the positive subcones an edge is added. This similarity will form a crucial part of the planarity proof in the next section. 

Before we prove that $G_\infty$ is a plane spanner, however, we first prove a useful visibility property that will form a building block for a number of the following proofs. We note that this property holds for any three points, not just vertices of the input. 

\begin{lemma}\label{lem:lem1}
	Let $u$, $v$, and $w$ be three points where $(w, u)$ and $(u, v)$ are both visibility edges and $u$ is not a vertex of any polygonal obstacle $P$ where the open polygon $P'$ of $P$ intersects $\triangle wuv$. The area $A$, bounded by $(w,u)$, $(u,v)$, and a convex chain formed by visibility edges between $w$ and $v$ inside $\triangle wuv$, does not contain any vertices and is not intersected by any obstacles. 
\end{lemma}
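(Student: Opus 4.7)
The plan is to construct the chain explicitly as the $u$-facing portion of a convex hull boundary and then verify its three required properties. I would set $T = (V \cap \triangle wuv) \setminus \{u\}$, noting that $w, v \in T$; since $u$ is an extreme point of $\triangle wuv$ and does not belong to $T$, the point $u$ lies outside $\mathrm{conv}(T)$, and therefore $\partial \mathrm{conv}(T)$ splits into two convex polygonal chains joining $w$ and $v$. The candidate chain $w = x_0, x_1, \ldots, x_k = v$ would be the one lying on the side of $\mathrm{conv}(T)$ facing $u$. By convexity of $\triangle wuv$ it lies inside the triangle, and by construction it is convex when viewed from $u$.

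Next I would dispatch the easy property: the bounded region $A$ contains no vertex of $V$ in its interior. Every vertex of $V$ inside $\triangle wuv$ is either $u$ (on $\partial A$) or an element of $T \subseteq \mathrm{conv}(T)$, and $A$ is separated from $\mathrm{conv}(T)$ by the chain, so their intersection lies in the chain, which is contained in $\partial A$.

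The main step, which I expect to be the principal obstacle, is to show that each chord $(x_{i-1}, x_i)$ is a visibility edge and that no obstacle interior meets $A$. The key geometric input is that convexity of the chain with respect to $u$ forces $w$, $v$, and all other chain vertices to lie on the non-$u$ side of the line $\ell_i$ extending $(x_{i-1}, x_i)$, so in particular the entire segment $(w, v)$ lies on that side. Suppose for contradiction that the boundary of some obstacle $P$ properly crosses a chord $(x_{i-1}, x_i)$; then an edge $e$ of $P$ has one endpoint $a$ strictly on the $u$-side of $\ell_i$. Walking along $e$ from the crossing point (which lies inside $\triangle wuv$) toward $a$ keeps us on the $u$-side of $\ell_i$, and the only portions of $\partial \triangle wuv$ on that side are subsegments of $(u, w)$ and $(u, v)$, which $e$ cannot properly cross because they are visibility edges; general position prevents $e$ from passing through $u$, $w$, or $v$. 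Hence $a \in \triangle wuv$, so $a \in T \subseteq \mathrm{conv}(T)$, which lies on the non-$u$ side of $\ell_i$---a contradiction. The hypothesis that $u$ is not a vertex of any such $P$ is used precisely to rule out the otherwise-possible case $a = u$. An analogous argument dispatches obstacle interiors inside $A$: any obstacle $P$ meeting $A$ must have $\partial P$ crossing a chord (just handled), crossing $(u, w)$ or $(u, v)$ (impossible by visibility), or being contained in $\overline{A}$, in which case all $P$-vertices lie on the chain and hence $P \subseteq \mathrm{conv}(T)$, precluding any intersection with the open region $A$. The subtlety I expect to be hardest is treating chain vertices that happen to be obstacle corners, where an obstacle's edges meet the chain without properly crossing a chord; here the general-position assumptions from the preliminaries are essential to rule out a protruding interior.
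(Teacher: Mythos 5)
Your construction of the chain as the $u$-facing boundary of $\mathrm{conv}\bigl((V\cap\triangle wuv)\setminus\{u\}\bigr)$ is the same as the paper's (the convex hull of the interior vertices together with $w$ and $v$, minus the edge $(w,v)$), and your contradiction for a chord crossed by an obstacle edge---forcing an obstacle vertex strictly on the $u$-side of the chord's supporting line, where no vertex can lie---is exactly the paper's argument, there phrased via the empty sub-triangle $\triangle a'ub'\subseteq A$. The proposal is correct and follows essentially the same route as the paper, including the final step that an obstacle meeting $A$ would need all its vertices on the chain (or at $u$), hence be contained in the hull and miss $A$.
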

\begin{proof}
	We first construct the convex chain between $w$ and $v$. Let $Q$ be the set of vertices inside $\triangle wuv$. If $Q=\emptyset$, $(w,v)$ is the convex chain. If $Q\neq\emptyset$, consider the convex hull of $Q \cup \{v,w\}$. After removing edge $(w,v)$ we obtain a convex chain between $w$ and $v$ inside $\triangle wuv$. By construction, the area $A$ does not contain any vertices.
	
	\begin{figure}[h!]
		\centering
		\includegraphics{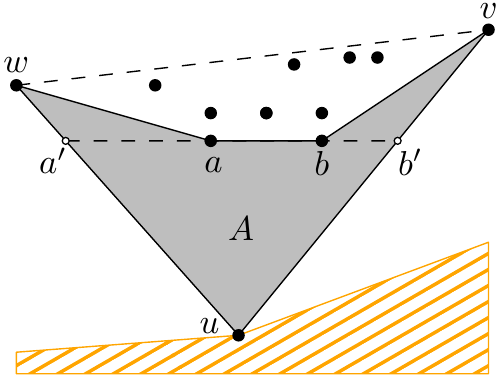}
		\caption{Triangle $wuv$ and the convex chain between $w$ and $v$ with consecutive vertices $a$ and $b$ and their intersections with the triangle's boundary at $a'$ and $b'$. The gray region $A$ is empty.}
		\label{fig:lem1}
	\end{figure}
	
	Next, we show, by contradiction, that every edge between two consecutive vertices $a$ and $b$ on the convex chain is a visibility edge. Extend $(a,b)$ to the boundaries of the triangle and let the intersections be $a'$ and $b'$ (see Figure~\ref{fig:lem1}). Since $u$ is not a vertex of any (open) polygonal obstacle intersecting $\triangle wuv$ and $(w, u)$ and $(u, v)$ are visibility edges, any polygonal obstacle crossing $(a,b)$ will have at least one vertex inside $\triangle a'ub'$. However, since $\triangle a'ub'$ is contained in $A$, this contradicts that $A$ is empty. Therefore, the convex chain consists of visibility edges. 
	
	Finally, since $A$ does not contain any vertices and is bounded by visibility edges, the vertices of any polygonal obstacle intersecting this area have to be contained in the set of vertices bounding $A$. To also intersect $A$, this implies that $u$ is one of these vertices, which is contradicts that $u$ is not a vertex of any such polygon. Hence, $A$ does not contain any polygonal obstacles. 
\end{proof}

For ease of notation, we define the canonical triangle of two vertices $u$ and $v$ with $v \in C_i^u$, denoted $\bigtriangledown_u^v$, to be the equilateral triangle defined by the boundaries of $C_i^u$ and the line through $v$ perpendicular to the bisector of $C_i^u$.

\section{The Polygon-Constrained-Half-$\boldsymbol{\Theta_6}$-Graph}
In this section we show that graph $G_\infty$ is a plane $2$-spanner of the visibility graph. We first prove it is plane.

\begin{lemma}\label{lem:lem3}
	\graphname is a plane graph.
\end{lemma}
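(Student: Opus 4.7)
The plan is a proof by contradiction, following the standard empty-canonical-triangle technique for planarity of $\Theta_6$-type graphs, adapted to the polygon-constrained setting via Lemma~\ref{lem:lem1}. Suppose two edges $(u,v)$ and $(x,y)$ of $G_\infty$ cross properly. Without loss of generality $v\in C_i^u$, so $(u,v)$ was added because $v$ has the smallest projection onto the bisector of $C_i^u$ among all vertices visible from $u$ in the subcone of $C_i^u$ containing $v$. Writing $T'$ for the intersection of $\bigtriangledown_u^v$ with that subcone, the construction immediately gives that no other vertex of $V$ visible from $u$ lies in $T'$.

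I would next apply Lemma~\ref{lem:lem1} with the third point chosen on the boundary of the subcone (for example, at the intersection of the relevant cone side with the perpendicular through $v$, or at a polygon vertex where the subcone is delimited) so that the two flanking sides of the auxiliary triangle are visibility edges. The lemma then guarantees that the region it encloses, which covers $T'$, contains no vertices of $V$ and is not properly intersected by any obstacle; in particular every point of $T'$ is visible from $u$. Since the crossing point of $(x,y)$ and $(u,v)$ lies strictly inside $T'$, and since $(x,y)$ is itself a visibility edge that cannot properly cross a polygon boundary, the segment $(x,y)$ can leave $T'$ only through one of the two cone-boundary sides of $C_i^u$ from $u$ or through the perpendicular side through $v$.

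A case analysis on these exit possibilities, paralleling the standard (unconstrained) half-$\Theta_6$-graph planarity argument and the line-segment-constrained version of Bose \etal~\cite{basepaper}, and invoking the reciprocal property $v\in C_i^u \iff u\in\overline{C_i^v}$ to pin down in which positive cone of $x$ or $y$ the other endpoint lies, forces at least one endpoint of $(x,y)$, say $x$, to lie inside $T'$. But then $x$ is a vertex of $V$ visible from $u$ with strictly smaller projection onto the bisector of $C_i^u$ than $v$, contradicting the minimality that defined the edge $(u,v)$. The main obstacle is precisely this final case analysis: ruling out that $(x,y)$ slips cleanly through $T'$ with both endpoints outside, without either creating a cheaper candidate than $v$ for the edge in $C_i^u$ or violating the closest-projection property of the edge $(x,y)$ in the positive (sub)cone of $x$ containing $y$. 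Symmetric application of Lemma~\ref{lem:lem1} to the canonical triangle $\bigtriangledown_x^y$ is the key tool for closing off these subcases.
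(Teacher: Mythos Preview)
Your approach is genuinely different from the paper's, and it is worth flagging both the contrast and a real gap.

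\textbf{The paper's route.} The paper does not redo a planarity argument at all. It observes that replacing each boundary edge of every polygonal obstacle by a line-segment constraint yields exactly the setting of Bose~\etal~\cite{basepaper}; outside the polygon interiors the subcone structure, and hence the set of added edges, is identical, while inside the polygons the constrained half-$\Theta_6$-graph may add extra edges. Thus $G_\infty$ is a subgraph of a graph already known to be plane, and planarity follows in one line. This is shorter and sidesteps every visibility subtlety.

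\textbf{Your route and its gap.} You attempt the direct empty-triangle contradiction. The weak point is the sentence ``the lemma then guarantees that the region it encloses, which covers $T'$, contains no vertices of $V$ and is not properly intersected by any obstacle.'' Lemma~\ref{lem:lem1} only empties the region between $u$ and the convex chain, and that region equals $T'$ \emph{only if} the auxiliary side $(u,w)$ is itself a visibility edge and the chain degenerates to the single segment $(w,v)$. When the subcone boundary through $w$ is a cone side rather than an obstacle edge, an obstacle not incident to $u$ can cross the segment $uw$; then $(u,w)$ is not a visibility edge and Lemma~\ref{lem:lem1} does not apply to the triangle you want. You gesture at choosing $w$ ``so that the two flanking sides \ldots\ are visibility edges,'' but you do not show such a $w$ always exists covering all of $T'$, and in general it need not along a raw cone boundary. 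Until that is fixed, the assertion ``every point of $T'$ is visible from $u$'' is unproved, and the subsequent case analysis cannot get started. The paper's subgraph reduction avoids precisely this difficulty.
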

\begin{proof}
	We prove this lemma by proving that \graphname is a subgraph of the constrained half-$\Theta_6$-graph introduced by Bose~\etal~\cite{basepaper}. Recall that in their graph the set of obstacles is a plane set of line segments. 
	
	\begin{figure}[h!]
		\centering
		\subfloat[$G_\infty$ with one obstacle $P$.]{
			\includegraphics{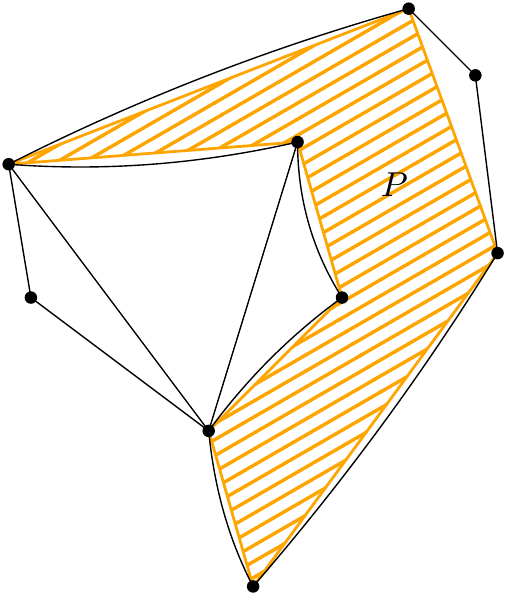}
		} \hspace{3em}
		\subfloat[The constrained half-$\Theta_6$-graph with constraints $l_0, l_1, ..., l_6$.]{
			\includegraphics{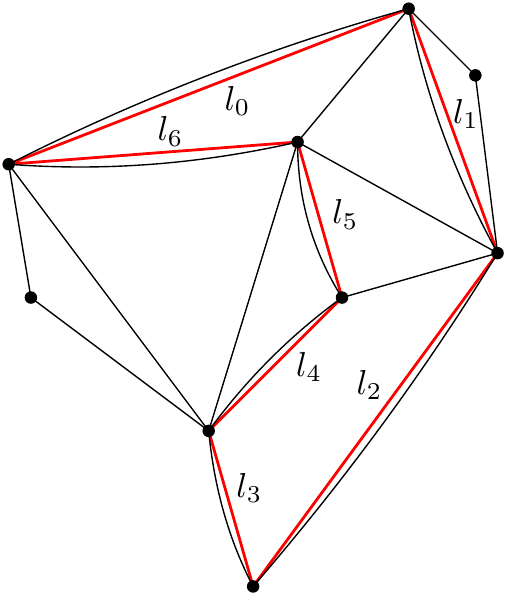}
		}
		\caption{\graphname is a subgraph of the constrained half-$\Theta_6$-graph. Edges coinciding with obstacle boundaries are drawn as slight arcs for clarity.}
		\label{fig:lem3}
	\end{figure}

	Given a set of polygonal obstacles, we convert them into line segments as follows: Each boundary edge on polygonal obstacle $P \in S$ forms a line segment obstacle $l_i$. Since these edges meet only at vertices and no two obstacles intersect, this gives a plane set of line segments. Recall that the constrained half-$\Theta_6$-graph constructs subcones the same way as \graphname does. This means that when considering the plane minus the interior of the obstacles in $S$, the constrained half-$\Theta_6$-graphs adds the same edges as $G_\infty$, while inside $P$ the constrained half-$\Theta_6$-graph may add additional edges (see Figure~\ref{fig:lem3}). Hence, $G_\infty$ is a subgraph of the constrained half-$\Theta_6$-graph. Since Bose~\etal showed that their graph is plane, it follows that $G_\infty$ is plane as well. 
\end{proof}

Next, we show that \graphname is a $2$-spanner of the visibility graph.

\begin{lemma}\label{lem:Ginftyspanning}
	Let $u$ and $v$ be vertices where $(u,v)$ is a visibility edge and $v$ lies in a positive cone of $u$. Let $a$ and $b$ be the two corners of $\bigtriangledown_u^v$ opposite to $u$ and let $m$ be the midpoint of $(a,b)$. There exists a path from $u$ to $v$ in \graphname such that the path is at most $(\sqrt{3}\cdot\cos{\angle muv} + \sin{\angle muv)}\cdot|uv| \leq 2\cdot|uv|$ in length.
\end{lemma}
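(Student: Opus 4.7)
The plan is to argue by induction on the \emph{rank} of $\bigtriangledown_u^v$, where the rank is the number of points of $V$ lying strictly inside this canonical triangle. By reflecting across the bisector of $C_i^u$ if necessary, I may assume that $v$ lies on the side of the bisector closer to corner $b$, so $\angle muv\in[0,\pi/6]$; the target quantity equals $2\sin(\angle muv+\pi/3)|uv|$ and ranges over $[\sqrt{3}|uv|,\,2|uv|]$, which in particular is always at least $|uv|$.

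The base case is when no vertex of $V$ lies strictly inside $\bigtriangledown_u^v$. Then in the subcone of $C_i^u$ that contains $v$, the only candidate for $u$'s edge is $v$ itself, so $(u,v)\in G_\infty$ and the single-edge path meets the bound.

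For the inductive step I let $w$ be the vertex such that $(u,w)$ is the edge of $G_\infty$ in the subcone of $C_i^u$ containing $v$; if $w=v$ we reduce to the base case. Otherwise, by the defining property of $G_\infty$, $w$'s projection on the bisector is strictly closer to $u$ than $v$'s, so $w$ lies strictly inside $\bigtriangledown_u^v$. I would next invoke Lemma~\ref{lem:lem1} on $\triangle wuv$: the edges $(w,u)$ and $(u,v)$ are visibility edges by construction, and because $w$ and $v$ share a subcone of $C_i^u$, the interior of any polygon $P$ having $u$ as a corner is separated from this subcone near $u$ by a subcone-defining boundary edge emanating from $u$; one then checks that such a $P$ cannot have its interior intersect $\triangle wuv$, so the hypothesis of Lemma~\ref{lem:lem1} is met. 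Lemma~\ref{lem:lem1} thus supplies a convex chain $w=w_0,w_1,\ldots,w_k=v$ of visibility edges inside $\triangle wuv$ bounding an empty region with $(u,w)$ and $(u,v)$.

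Each edge $(w_i,w_{i+1})$ of the chain is a visibility edge whose canonical triangle lies strictly inside $\bigtriangledown_u^v$ and hence has strictly smaller rank, so the inductive hypothesis yields a path in $G_\infty$ from $w_i$ to $w_{i+1}$ of length at most $(\sqrt{3}\cos\alpha_i+\sin\alpha_i)|w_iw_{i+1}|$, where $\alpha_i$ is the corresponding angle at $w_i$. Concatenating $(u,w)$ with these subpaths yields a $u$-to-$v$ path of total length
\[
|uw|+\sum_{i=0}^{k-1}\bigl(\sqrt{3}\cos\alpha_i+\sin\alpha_i\bigr)|w_iw_{i+1}|,
\]
and it remains to bound this by $(\sqrt{3}\cos\angle muv+\sin\angle muv)|uv|$. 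I expect this trigonometric estimate to be the main obstacle. I would dispatch it by splitting on whether $w$ lies on the same side of the bisector of $C_i^u$ as $v$ or on the opposite side, then projecting the convex chain onto the two cone boundaries at $u$ and using the identity $\sqrt{3}\cos\theta+\sin\theta=2\sin(\theta+\pi/3)$ together with the convexity of the chain; in both cases the inequality should reduce, via the law of sines applied inside $\bigtriangledown_u^v$, to an inductive computation analogous to the one carried out by Bose~\etal{} for the line-segment-constrained half-$\Theta_6$-graph.
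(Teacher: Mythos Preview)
Your outline follows the paper's overall architecture---take the first edge of $G_\infty$, apply Lemma~\ref{lem:lem1} to get a convex chain to $v$, and induct along the chain---but there are two real gaps.

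\textbf{The induction parameter does not decrease.} You induct on the number of vertices strictly inside $\bigtriangledown_u^v$ and assert that the canonical triangle of each chain edge ``lies strictly inside $\bigtriangledown_u^v$''. This is false. Some chain edges have their apex in a \emph{different} positive cone (in the paper's notation, Type~(a) edges have $v_{i-1}\in C_1^{v_i}$), and the corresponding canonical triangle points in the $C_1$ direction; it can protrude outside $\bigtriangledown_u^v$ and pick up arbitrarily many new vertices, so the rank need not drop. The paper avoids this by inducting on the \emph{area} of the canonical triangle: all chain vertices lie above the horizontal line through $w$, which forces each sub-triangle to have strictly smaller side length regardless of its orientation. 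This is easy to fix in your write-up, but you must fix it.

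\textbf{The inductive hypothesis is too weak to close the sum.} This is the essential problem. You propose to apply the bare statement $\delta(w_i,w_{i+1})\le(\sqrt3\cos\alpha_i+\sin\alpha_i)\,|w_iw_{i+1}|$ to every chain edge. That quantity equals $\max\{|w_ia_i|+|a_iw_{i+1}|,\;|w_ib_i|+|b_iw_{i+1}|\}$, the longer of the two ways around the small canonical triangle. Summed along the chain these maxima do \emph{not} telescope to $|ua|+|av|$ or $|ub|+|bv|$: to telescope you need all contributions to fall on the \emph{same} side, and nothing in your hypothesis forces that. The paper (like Bose~\etal) therefore carries a \emph{strengthened} three-case inductive hypothesis: if the left half $A$ of the canonical triangle is empty the bound is $|ub|+|bv|$, if the right half $B$ is empty the bound is $|ua|+|av|$, and only if both are nonempty does one take the max. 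This refinement is what makes the induction go through: for chain edges of Types~(a) and~(b) the relevant ``inner'' half is contained in the empty region furnished by Lemma~\ref{lem:lem1}, so the hypothesis yields the \emph{correct} side and the pieces add up. Your hand-wave to ``projecting the convex chain onto the two cone boundaries'' and ``convexity'' does not substitute for this; without the case split you will find configurations where the sum genuinely exceeds the target. You need to state and prove the stronger hypothesis.
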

\begin{proof}
	The following proof assumes without loss of generality that $v$ lies in $C^u_0$ (Figure~\ref{fig:lem4}a). We prove the lemma by induction on the area of $\bigtriangledown_u^v$ (or the ordering of all triangles $\bigtriangledown_u^v$ to be more precise). Consider $\bigtriangledown_u^v$ for every visibility edge $(u,v)$ and let $\delta(u,v)$ denote the length of the shortest path from $u$ to $v$ in \graphname that lies within $\bigtriangledown_u^v$. Let $A$ be the triangle $\triangle uav$ and $B$ be $\triangle ubv$ (see Figure~\ref{fig:lem4}b). In the following, we say that $A$ and $B$ are empty if they do not contain vertices.  We use the following induction hypothesis. 
	
	\begin{figure}[h!]
		\centering
		\subfloat[$\bigtriangledown_u^v$ with $a$, $b$, and $m$.]{
			\includegraphics{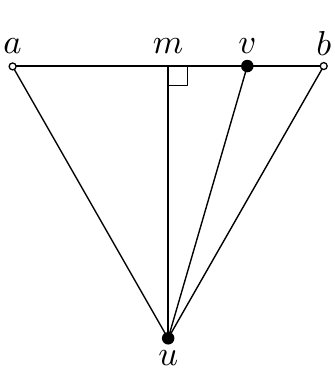}
		} \hspace{3em}
		\subfloat[Triangles $A$ and $B$.]{
			\includegraphics{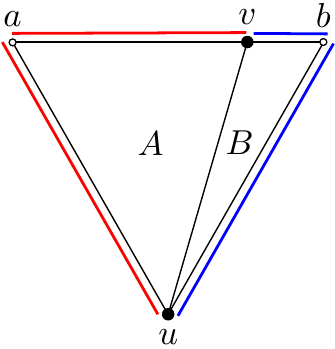}
		} \hspace{3em}
		\subfloat[The first step of the inductive argument.]{
			\includegraphics{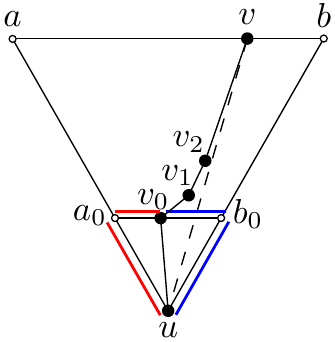}
		}
		\caption{The various points and region defined with respect to $\bigtriangledown_u^v$.}
		\label{fig:lem4}
	\end{figure}
	
	\textbf{Induction hypothesis:} 
	\[
	\delta(u,v)\leq 
	\begin{cases}
	\color{blue}|ub|+|bv|\color{black},& \text{if }A=\emptyset\\
	\color{red}|ua|+|av|\color{black},& \text{if }B=\emptyset\\
	\max\{\color{red}|ua|+|av|\color{black}, \color{blue}|ub|+|bv|\color{black}\},& \text{if }A\neq\emptyset \text{ and }B\neq\emptyset
	\end{cases}
	\] 
	
	\textbf{Base case:} If $\bigtriangledown_u^v$ is the smallest canonical triangle, $v$ is the closest visible vertex to $u$ in that positive subcone and thus $(u,v)$ is an edge in \graphname and $\delta(u,v)=|uv|$. Therefore, by triangle inequality, the inductive hypothesis holds. 
	
	\textbf{Induction step:} Assume that the induction hypothesis holds for every visibility edge with canonical triangle smaller than $\bigtriangledown_u^v$. If $(u,v)$ is an edge in the \longname, the induction hypothesis holds with the same argument as in the base case. Otherwise, let $v_0$ be the closest visible vertex of $u$ in the subcone of $C_0^u$ that contains $v$. This defines a canonical triangle $\bigtriangledown_u^{v_0}$ with $a_0$ and $b_0$ at its corners opposite to $u$ (see Figure~\ref{fig:lem4}c). Thus, $\delta(u,v)\leq|uv_0|+\delta(v_0,v)$ and  $|uv_0|\leq\min\{\color{red}|ua_0|+|a_0v_0|\color{black}, \color{blue}|ub_0|+|b_0v_0|\color{black}\}$. Without loss of generality, assume that $(u,v_0)$ lies to the left of $(u,v)$ and hence $A\neq\emptyset$. 
	
	Applying Lemma~\ref{lem:lem1} on triangle $\triangle uvv_0$ with visibility edges $(u,v)$ and $(u,v_0)$, there exists a convex chain, $v_0,...,v_k,v$ of visibility edges in $\triangle uvv_0$ between $u$ to $v$. Note that since $v_0$ lies in the same subcone as $v$, $u$ cannot be a vertex of an obstacle intersecting the triangle, as required for the application of Lemma~\ref{lem:lem1}. Since $v_0$ is the closest visible vertex to $u$, the vertices $v_1,...,v_k,v$ on the convex chain are all above the line $(a_0,b_0)$.
	
	Let $a_i$ and $b_i$ be the corners of canonical triangle of $v_{i-1}$ and $v_i$. Let $m_i$ be the midpoint of $(a_i,b_i)$ and define $A_i=\triangle a_iv_{i-1}v_i$ and $B_i=\triangle b_iv_{i-1}v_i$. There are three possible arrangements for each visibility edge $(v_{i-1},v_i)$ on the convex chain: (a) $v_{i-1}$ lies in the cone $C^{v_i}_1$, (b) $v_i$ lies in cone $C^{v_{i-1}}_0$ to the right of or on $m_i$, and (c) $v_i$ lies to the left of $m_i$ in cone $C^{v_{i-1}}_0$ (as shown in Figure~\ref{fig:lem4_1}). We proceed to bound the length of the induction path in each of these cases. 
	
	\begin{figure}[h!]
		\centering
		\subfloat[$v_{i-1}$ lies in the cone $C^{v_i}_1$.]{
			\includegraphics{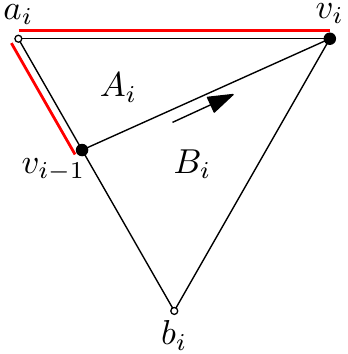}
		} \hspace{3em}
		\subfloat[$v_i$ lies in cone $C^{v_{i-1}}_0$ to the left of $m_i$.]{
			\includegraphics{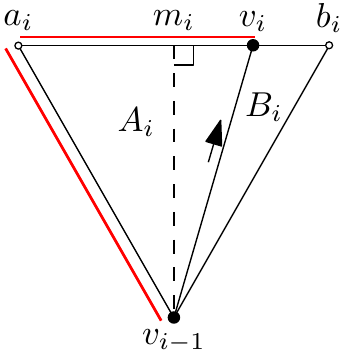}
		} \hspace{3em}
		\subfloat[$v_i$ lies in cone $C^{v_{i-1}}_0$ to the right of $m_i$.]{
			\includegraphics{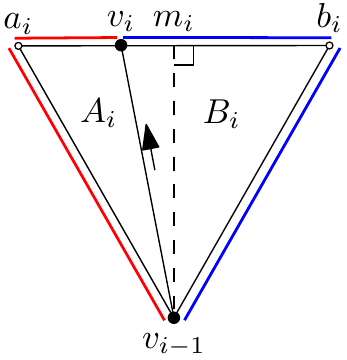}
		}
		\caption{The three configurations of visibility edges on the convex chain.}
		\label{fig:lem4_1}
	\end{figure}
	
	\textbf{Type (a):} Since $B_i$ is contained in the empty region of Lemma~\ref{lem:lem1} (it lies between the convex chain and $(u,v)$), it is empty. Since $(v_{i-1},v_i)$ is a visibility edge and the area of $\bigtriangledown_{v_i}^{v_{i-1}}$ is smaller than the area of $\bigtriangledown_u^v$, the induction hypothesis implies that $\delta(v_{i-1},v_i)\leq\color{red}|v_{i-1}a_i|+|a_iv_i|\color{black}$.
	
	\textbf{Type (b):} Since $(v_{i-1},v_i)$ is a visibility edge and the area of $\bigtriangledown_{v_{i-1}}^{v_i}$ is smaller than the area of $\bigtriangledown_u^v$, the induction hypothesis implies that $\delta(v_{i-1},v_i)\leq\max\{\color{red}|v_{i-1}a_i|+|a_iv_i|\color{black}, \color{blue}|v_{i-1}b_i|+|b_iv_i|\color{black}\}$. Since $v_i$ lies to the right of or on $m_i$, $\color{red}|a_iv_i|\color{black}\geq\color{blue}|b_iv_i|\color{black}$ and therefore $\color{red}|v_{i-1}a_i|+|a_iv_i|\color{black}\geq\color{blue}|v_{i-1}b_i|+|b_iv_i|\color{black}$ and $\delta(v_{i-1},v_i)\leq\color{red}|v_{i-1}a_i|+|a_iv_i|\color{black}$.
	
	\textbf{Type (c):} Since $(v_{i-1},v_i)$ is a visibility edge and the area of $\bigtriangledown_{v_{i-1}}^{v_i}$ is smaller than the area of $\bigtriangledown_u^v$, we can apply the induction hypothesis. If $B_i=\emptyset$ then $\delta(v_{i-1},v_i)\leq\color{red}|v_{i-1}a_i|+|a_iv_i|\color{black}$. Otherwise, $\delta(v_{i-1},v_i)\leq\color{blue}|v_{i-1}b_i|+|b_iv_i|\color{black}$, since $\color{red}|a_iv_i|\color{black}<\color{blue}|b_iv_i|\color{black}$.

	Note that the three types of visibility edges appear in order on the convex chain when directed from $u$ to $v$. To complete the proof, we consider the three possible convex chains, depending on the location of $v$ with respect to $m$ and whether $B$ is empty. 
	
	\begin{figure}[h!]
		\centering
		\subfloat[$v$ lies on segment $m b$.]{
			\includegraphics{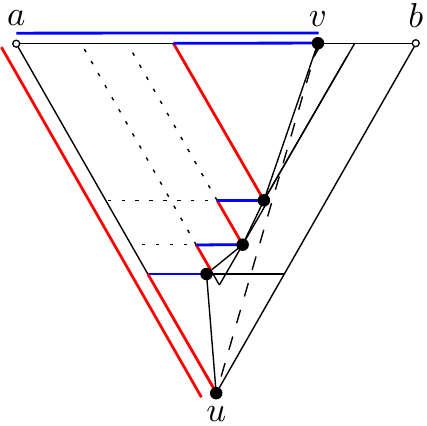}
		} \hspace{3em}
		\subfloat[$v$ lies on segment $a m$ and $B = \emptyset$.]{
			\includegraphics{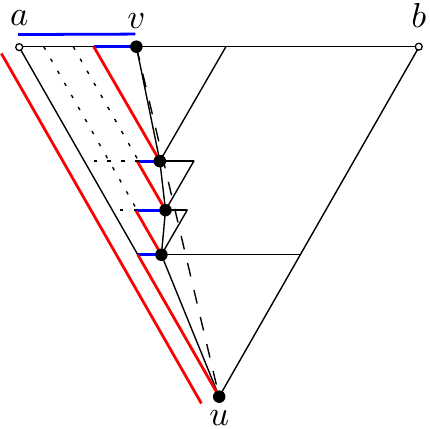}
		} \hspace{5em}
		\subfloat[$v$ lies on segment $a m$ and $B \neq \emptyset$. Bounding the two parts of the convex chain.]{
			\includegraphics{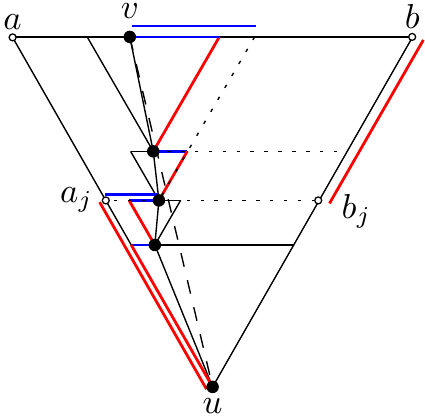}
		} \hspace{3em}
		\subfloat[$v$ lies on segment $a m$ and $B \neq \emptyset$. Summing up to obtain the total path length.]{
			\includegraphics{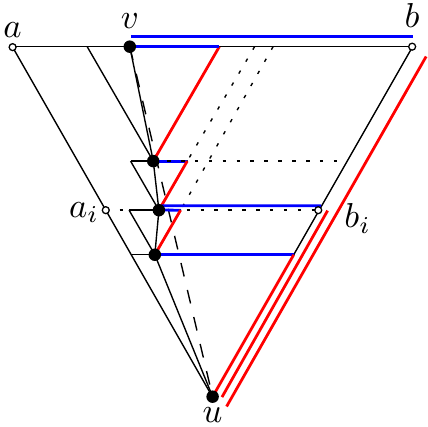}
		}
		\caption{Bounding the total length of the three possible compositions of the convex chain.}
		\label{fig:lem4_2}
	\end{figure}
	
	\textbf{Case (a):} If $v$ lies on $m b$, the convex chain can only contain Type (a) and Type (b) configurations. We can bound the total length of the induction path by summing up the two components of the inductive lengths of each visibility edge $(v_{i-1},v_i)$ (see Figure~\ref{fig:lem4_2}a). The horizontal components sum up to $|av|$ and the components parallel to $ua$ sum up to $|ua|$. Hence, we can conclude that $\delta(u,v)\leq\color{red}|ua|\color{black}+\color{blue}|av|\color{black}$.
	
	\textbf{Case (b):} If $v$ lies on $a m$ and $B=\emptyset$, the convex chain can contain all three types of visibility edges. However, since $B$ is empty and using the empty area implied by Lemma~\ref{lem:lem1}, the area between the convex chain and $(u,v)$ is empty. Hence, $B_i$ is empty for each canonical triangle along the convex chain. Therefore, for visibility edges of Type (c), $\delta(v_{i-1},v_i)\leq|v_{i-1}a_i|+|a_iv_i|$. The total length of the path can now be bounded the same way as was done in the previous case (see Figure~\ref{fig:lem4_2}b). Thus, $\delta(u,v)\leq\color{red}|ua|\color{black}+\color{blue}|av|\color{black}$. 
	
	\textbf{Case (c):} If $v$ lies on $a m$ and $B\neq\emptyset$, the convex chain can again contain all three types of visibility edges. Since $v_0$ lies in $A$, both $A$ and $B$ are not empty. Thus, since $|a v| < |v b|$, it suffices to show that $\delta(u,v)\leq\color{red}|ub|\color{black}+\color{blue}|bv|\color{black}$. Let $(v_j, v_{j+1})$ be the first Type (c) visibility edge and let $a_j$ and $b_j$ be the two corners of $\bigtriangledown_u^{v_j}$. We can sum up the length of the path up to $v_j$ the same way we did in case (a), since this part only contains Type (a) and Type (b) visibility edges (see Figure~\ref{fig:lem4_2}c). This gives that this part of the path has length at most $|u a_j| + |a_j v_j|$. Next, since $v$ lies on $a m$, $|a_jv_j|<|b_jv_j|$. Thus, the length of the first part of the path is at most $|u b_j| + |b_j v_j|$. Adding to this the lengths of all Type~(c) confirgurations, we obtain the following bound: $\delta(u,v)\leq\color{red}|ub|\color{black}+\color{blue}|bv|\color{black}$ (see Figure~\ref{fig:lem4_2}d).

	Thus, in all cases the induction hypothesis is satisfied. Using basic trigonometry, we can express the maximum value of the induction hypothesis as $\sqrt{3}\cdot\cos \angle muv+\sin \angle muv$, for $\angle muv\in[0,\pi/6]$. This function is maximal when $\angle muv = \pi/6$, where it has value $2$. Therefore, \longname is a spanner with spanning ratio $2$.
\end{proof}

Lemmas~\ref{lem:lem3} and~\ref{lem:Ginftyspanning} imply the following theorem. 

\begin{theorem}
$G_\infty$ is a plane $2$-spanner of the visibility graph. 
\end{theorem}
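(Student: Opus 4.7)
The theorem is essentially a repackaging of the two preceding lemmas, so the plan is simply to show how they combine to give the full statement. Planarity is immediate: Lemma~\ref{lem:lem3} already asserts that $G_\infty$ is plane, so no further work is needed on that side of the statement.

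The more substantive part is promoting Lemma~\ref{lem:Ginftyspanning} from a single-visibility-edge bound to a spanning ratio for the whole graph. My plan is to fix two arbitrary vertices $x,y \in V$ and consider a shortest path $x = p_0, p_1, \dots, p_k = y$ in $Vis(V,S)$, so that $d_{Vis(V,S)}(x,y) = \sum_{i=1}^{k} |p_{i-1} p_i|$ and each $(p_{i-1}, p_i)$ is a visibility edge. For each such edge, I want to exhibit a path in $G_\infty$ of length at most $2\,|p_{i-1} p_i|$; concatenating these paths yields a walk from $x$ to $y$ in $G_\infty$ of total length at most $2\,d_{Vis(V,S)}(x,y)$, which gives $d_{G_\infty}(x,y) \leq 2\,d_{Vis(V,S)}(x,y)$ as required.

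To produce the per-edge bound, I will apply Lemma~\ref{lem:Ginftyspanning} directly. The only subtlety is that the lemma requires the second endpoint to lie in a \emph{positive} cone of the first. For an arbitrary visibility edge $(p_{i-1}, p_i)$, if $p_i$ lies in a positive cone of $p_{i-1}$ the lemma applies as stated; otherwise $p_i$ lies in a negative cone of $p_{i-1}$, which by the cone-duality property noted in the preliminaries means $p_{i-1}$ lies in a positive cone of $p_i$, so I apply the lemma with the roles swapped. Either way, Lemma~\ref{lem:Ginftyspanning} yields a path in $G_\infty$ between the two endpoints of length at most $2\,|p_{i-1} p_i|$.

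There is no real obstacle here — the argument is a one-line combination plus the small cone-duality observation. The proof will be short: invoke Lemma~\ref{lem:lem3} for planarity, and invoke Lemma~\ref{lem:Ginftyspanning} edge by edge along a shortest visibility path, using positive/negative cone duality to line up hypotheses, then sum.
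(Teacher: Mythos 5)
Your proposal is correct and matches the paper, which simply states that the theorem follows from Lemmas~\ref{lem:lem3} and~\ref{lem:Ginftyspanning}; you have merely written out the routine edge-by-edge argument along a shortest visibility path (with the positive/negative cone duality) that the paper leaves implicit.
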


We note that while every vertex in $G_\infty$ has at most one edge in every positive subcone, it can have an unbounded number of edges in its negative subcones. In the following sections, we proceed to bound the degree of the spanner.

\section{Bounding the Degree}
In this section, we introduce $G_{15}$, a subgraph of the \longname of maximum degree 15. We obtain $G_{15}$ from $G_\infty$ by, for each vertex, removing all edges from its negative subcones except for the leftmost (clockwise extreme from $u$'s perspective) edge, rightmost (counterclockwise) edge, and the edge to the closest vertex in that cone (see Figure~\ref{fig:G15}a). Note that the edge to the closest vertex may also be the leftmost and/or the rightmost edge. 
	
	\begin{figure}[h!]
		\centering
		\subfloat[In $\overline{C^u_0}$ only the edges to $v_0$, $v_3$, and $v_4$ are kept.]{
			\includegraphics{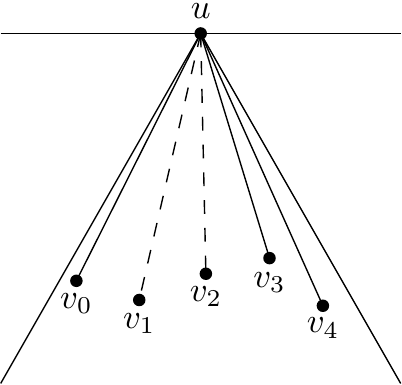}
		} \hspace{3em}
		\subfloat[The canonical path in $\overline{C^u_0}$ with canonical sequence $(v_0,v_1, v_2,v_3,v_4)$.]{
			\includegraphics{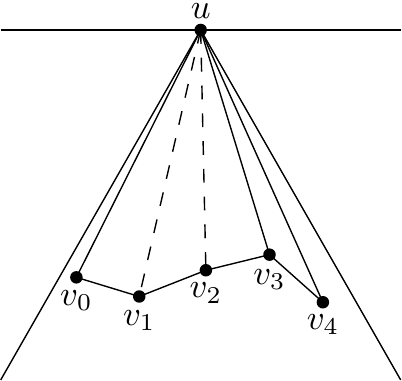}
		}
		\caption{Transforming $G_\infty$ into $G_{15}$.}
		\label{fig:G15}
	\end{figure}

By simply counting the number of subcones (three positive, three negative, and at most one additional subcone caused by an obstacle), we obtain the desired degree bound. Furthermore, since $G_{15}$ is a subgraph of $G_\infty$, it is also plane. 

\begin{lemma}
	The degree of each vertex $u$ in $G_{15}$ is at most $15$.
\end{lemma}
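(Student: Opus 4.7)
The plan is to bound the degree of every vertex $u$ in $G_{15}$ by counting the subcones around $u$ and then, for each subcone, bounding the number of incident edges that survive in $G_{15}$. This expands on the sketch given just before the statement, and the only step that needs real justification is the claim that at most one additional subcone is created by obstacles.

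First I would count the subcones. By construction, every vertex $u$ has six main cones: the three positive cones $C_0^u, C_1^u, C_2^u$ and the three negative cones $\overline{C_0^u}, \overline{C_1^u}, \overline{C_2^u}$. Since each vertex belongs to at most one polygonal obstacle, only the (at most) two polygon edges incident to $u$ can induce further subdivisions of $u$'s cones. By the construction rule, a cone $C$ of $u$ is subdivided exactly when there are visible vertices on both sides of an obstacle inside $C$. The two polygon edges incident to $u$ together bound a single ``interior wedge'' glued to $u$: if both edges lie in one cone $C$, that cone is split into the two exterior strips on either side of the wedge; but if the two incident polygon edges fall in two different cones, then within each affected cone the visible region lies entirely on one side of the edge passing through it and is therefore a single connected strip, so no splitting occurs. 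Hence at most one cone of $u$ is split, which gives at most $3+3+1=7$ subcones in total.

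Next I would bound the number of edges per subcone. In each positive subcone, the construction of $G_\infty$ adds at most one edge incident to $u$, and $G_{15}$ retains this edge; so every positive subcone contributes at most one edge. In each negative subcone of $u$, the edges incident to $u$ are exactly those introduced by the construction applied at other vertices that see $u$ in one of their positive cones; the definition of $G_{15}$ explicitly removes all but the leftmost edge, the rightmost edge, and the edge to the closest vertex in that subcone, so each negative subcone contributes at most three edges.

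Summing over the possible configurations, the worst case is when the one additional subcone is a split of a negative cone, so that $u$ has three positive subcones and four negative subcones, giving at most $3 \cdot 1 + 4 \cdot 3 = 15$ edges incident to $u$. All other cases yield a strictly smaller bound. The main obstacle I anticipate is the geometric argument that at most one cone around $u$ can be split by the polygon incident to $u$; everything after that is a direct consequence of the definitions of $G_\infty$ and $G_{15}$.
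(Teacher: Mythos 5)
Your proof is correct and follows essentially the same approach as the paper: count at most $3+3+1=7$ subcones (with at most one extra subcone coming from the single obstacle incident to $u$), charge one edge per positive subcone and three per negative subcone, and observe the worst case $3\cdot 1 + 4\cdot 3 = 15$ arises when a negative cone is split. Your explicit justification that a cone is split only when both polygon edges incident to $u$ fall in the same cone is a slightly more detailed version of the paper's remark that an obstacle lying across multiple cones merely shrinks them.
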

\begin{proof}
	By construction, there could only be one edge lying in each positive subcone of vertex $u$. During the transformation to $G_{15}$, we removed all edges except at most three edges in each negative subcone of $u$. Temporarily ignoring obstacles, there are three positive and three negative cones and therefore, the degree bound without obstacles is 12. 
	
	\begin{figure}[h!]
		\centering
		\includegraphics{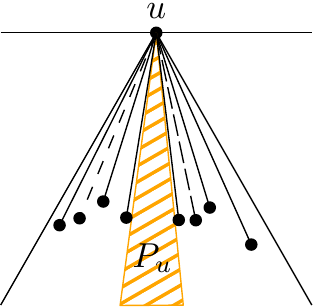}
		\caption{Negative cone split by obstacle $P_u$ has two subcones and at most 6 edges.}
		\label{fig:g15bound}
	\end{figure}
	
	Recall that each vertex $u\in V$ is part of at most one obstacle and denote this obstacle by $P_u$. If $P_u$ lies across multiple cones, the corresponding cones will only shrink in size and the number of subcones of $u$ will not increase. Otherwise, $P_u$ splits a cone into two subcones. If $P_u$ lies within $C^u_i$, this creates an extra positive subcone with one edge and the degree bound will increases to 13. If $P_u$ lies within $\overline{C^u_i}$, this creates an extra negative subcone with at most three edges and the degree bound increases to 15 (see Figure~\ref{fig:g15bound}), completing the proof. 
\end{proof}

\begin{lemma}\label{lem:G15P}
	$G_{15}$ is plane.
\end{lemma}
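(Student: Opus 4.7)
The proof will be essentially a one-liner, since the work has already been done in Lemma~\ref{lem:lem3}. The plan is to argue that $G_{15}$ is obtained from $G_\infty$ purely by deleting edges, and that edge deletion preserves planarity, so $G_{15}$ inherits planarity from $G_\infty$.

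Concretely, I would first recall the construction of $G_{15}$: starting from $G_\infty$, for each vertex $u$ and each negative subcone of $u$, we discard every incident edge except possibly the clockwise extreme, the counterclockwise extreme, and the shortest one. No new edges are introduced, and the vertex set is unchanged. Hence $G_{15}$ is a spanning subgraph of $G_\infty$, i.e., $E(G_{15}) \subseteq E(G_\infty)$.

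Next, I would invoke Lemma~\ref{lem:lem3}, which gives that $G_\infty$ is plane. Fix a plane embedding of $G_\infty$ (the straight-line drawing used to define it) and restrict it to the edges of $G_{15}$. Since no two edges of $G_\infty$ cross in this drawing, a fortiori no two edges of $G_{15}$ cross, so the restriction is a plane embedding of $G_{15}$. I expect no real obstacle here: there is nothing to verify beyond the fact that a subgraph of a plane graph (with the inherited drawing) is plane. The proof can be stated in two or three sentences.
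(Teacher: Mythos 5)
Your argument is correct and is exactly the paper's proof: $G_{15}$ is obtained from $G_\infty$ by deleting edges only, hence is a subgraph, and planarity is inherited via Lemma~\ref{lem:lem3}. No issues.
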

\begin{proof}
	$G_{15}$ is constructed by removing edges from the negative subcones of vertices in $G_\infty$. Therefore, $G_{15}$ is a subgraph of $G_\infty$. Since by Lemma~\ref{lem:lem3} $G_\infty$ is a plane graph, $G_{15}$ is also a plane graph.
\end{proof}

Finally, we look at the spanning ratio of $G_{15}$. We assume without loss of generality that we look at two vertices $u$ and $v$ such that $v\in\overline{C^u_0}$ and $u\in C^v_0$. Given vertex $u$ and a negative subcone, we define the canonical sequence of this subcone as the vertices adjacent to $u$ in \graphname that lie in the subcone in counterclockwise order (see Figure~\ref{fig:G15}b). The canonical path of $\overline{C^u_0}$ refers to the path that connects the consecutive vertices in the canonical sequence of $\overline{C^u_0}$. This definition is not dissimilar to that of Bose~\etal~\cite{basepaper}. 

\begin{lemma}\label{lem:canonpathempty}
	If $v_i$ and $v_{i+1}$ are consecutive vertices in the canonical sequence of the negative subcone $\overline{C^{u}_0}$, then $\triangle uv_iv_{i+1}$ is empty.
\end{lemma}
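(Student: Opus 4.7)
The plan is to argue by contradiction using the planarity of $G_\infty$ (Lemma~\ref{lem:lem3}). Suppose $\triangle u v_i v_{i+1}$ contains a vertex $w \notin \{u, v_i, v_{i+1}\}$, and choose $w$ so that its projection onto the downward bisector of $\overline{C^u_0}$ is smallest (equivalently, $w$ has maximum $y$-coordinate).

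First, I would apply Lemma~\ref{lem:lem1} to $\triangle v_i u v_{i+1}$ with the visibility edges $(v_i, u)$ and $(u, v_{i+1})$. Its hypothesis holds because the whole triangle sits inside a single subcone of $\overline{C^u_0}$, so no polygonal obstacle with $u$ as a vertex can have its open interior intersect $\triangle v_i u v_{i+1}$. This yields a convex chain of visibility edges from $v_i$ to $v_{i+1}$ together with a region $A$, bounded by $(v_i,u)$, $(u,v_{i+1})$, and the chain, that is empty of vertices and obstacles. Since $w$ is a vertex in the triangle but $A$ contains no vertices, $w$ must lie on the convex chain; and because $A$ is star-shaped with respect to $u$, the segment $(u,w) \subseteq \overline{A}$ is obstacle-free, so $(u,w)$ is a visibility edge.

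Next, consider the $G_\infty$-construction from $w$'s perspective. Since $w \in \overline{C^u_0}$, we have $u \in C^w_0$; let $\sigma$ denote the subcone of $C^w_0$ containing $u$, and let $w^*$ be the visible vertex in $\sigma$ whose projection onto the bisector of $\sigma$ is smallest, so that $(w, w^*) \in G_\infty$. If $w^* = u$, then $(u, w) \in G_\infty$ places $w$ in the canonical sequence of $\overline{C^u_0}$; since $w$ lies angularly strictly between $v_i$ and $v_{i+1}$ seen from $u$, this would place $w$ between $v_i$ and $v_{i+1}$ in the canonical sequence, contradicting their consecutivity. Otherwise, $w^* \neq u$ and $h_w < h_{w^*} < h_u$; the extremal choice of $w$ forbids $w^*$ from lying strictly inside $\triangle u v_i v_{i+1}$, so the edge $(w, w^*) \in G_\infty$ must leave the triangle. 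Being contained in $C^w_0$ and hence above $w$, the segment must cross one of $(u, v_i)$ or $(u, v_{i+1})$, both of which are in $G_\infty$, giving an edge crossing that contradicts Lemma~\ref{lem:lem3}.

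The delicate step, and the main obstacle I anticipate, is confirming that $(w, w^*)$ cannot instead exit through the side $(v_i, v_{i+1})$. This is automatic when $v_i$ and $v_{i+1}$ both have $y$-coordinates at most that of $w$, so that $(v_i, v_{i+1})$ is disjoint from $C^w_0$; but it requires care when, say, $v_{i+1}$ lies strictly above $w$ and inside $C^w_0$, because then $v_{i+1}$ itself becomes a candidate for $w^*$ and the segment $(w, v_{i+1})$ may remain inside the triangle. In that subcase I would argue separately, using the chain structure from Lemma~\ref{lem:lem1} to establish that $w$ (or an adjacent chain vertex) is visible from $v_i$ and lies in the subcone of $C^{v_i}_0$ containing $u$ with strictly smaller projection than $u$, directly contradicting the fact that $u$ is the closest-projection visible vertex in that subcone, i.e., contradicting $(u, v_i) \in G_\infty$. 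Resolving this subcase cleanly is where the careful work of the proof is concentrated.
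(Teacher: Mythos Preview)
Your route is genuinely different from the paper's, and the place you flag as ``delicate'' is not a detail---it is the heart of the matter, and your planarity argument does not close it.

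The paper does \emph{not} argue via an edge-crossing contradiction. Instead it splits $\triangle u v_i v_{i+1}$ into two regions: $A$, the part lying in $C^{v_i}_0 \cup C^{v_{i+1}}_0$ (the upward cones of $v_i$ and $v_{i+1}$ that contain $u$), and $B = \triangle u v_i v_{i+1}\setminus A$. For $A$ it takes the highest vertex $x$ in, say, $C^{v_i}_0\cap A$, shows $(u,x)$ is a visibility edge, applies Lemma~\ref{lem:lem1} to $\triangle u v_i x$, and uses the chain neighbour of $v_i$ to contradict that $u$ is the closest visible vertex in the relevant subcone of $C^{v_i}_0$. For $B$ it takes the highest vertex $y\in B$, observes that the part of $C^{y}_0$ below $u$ is contained in $C^{v_i}_0\cup C^{v_{i+1}}_0\cup B$, and concludes $u$ is the closest visible vertex to $y$ in $C^{y}_0$, so $(u,y)\in G_\infty$ and $y$ sits between $v_i$ and $v_{i+1}$ in the canonical sequence---contradiction.

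Your argument essentially handles only region $B$. When the global highest vertex $w$ happens to lie in $A$, or when $v_i$ or $v_{i+1}$ lies in $C^{w}_0$, your dichotomy ``$w^*=u$ or $(w,w^*)$ crosses $(u,v_i)$ or $(u,v_{i+1})$'' fails: $w^*$ can simply be $v_i$ or $v_{i+1}$, giving an edge that shares an endpoint with the triangle and crosses nothing. The fix you sketch---look at the neighbour of $v_i$ on the chain from Lemma~\ref{lem:lem1}---is the right instinct, but applied to the chain inside $\triangle v_i u v_{i+1}$ it does not work, because that neighbour need not lie in $C^{v_i}_0$ at all (it may sit in $\overline{C^{v_i}_1}$, say). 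To make it work you must restrict to $C^{v_i}_0\cap\triangle u v_i v_{i+1}$, take the highest vertex \emph{there}, and apply Lemma~\ref{lem:lem1} to the smaller triangle---which is exactly the paper's $A$-argument. At that point the planarity detour buys nothing, and the cleaner path is to adopt the $A$/$B$ decomposition from the outset.

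A smaller issue: your claim that the global highest $w$ must lie on the convex chain of Lemma~\ref{lem:lem1} (and hence that $(u,w)\subseteq\overline{A}$) is not automatic when one of $v_i,v_{i+1}$ is higher than $w$; in that case $w$ can lie in the interior of the convex hull of $Q\cup\{v_i,v_{i+1}\}$, and $(u,w)$ need not stay in the empty region.
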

\begin{proof}
	By Lemma~\ref{lem:lem3}, \graphname is a plane graph and by construction ${(u,v_i),(u,v_{i+1})}\in$ \graphname, therefore no edges or polygonal obstacles can intersect these sides of $\triangle uv_iv_{i+1}$. Hence, any obstacle or edge intersecting $\triangle uv_iv_{i+1}$ has at least one vertex within the triangle. 
	
	Let $A$ be the intersection of $\triangle uv_iv_{i+1}$ with the positive subcones of $v_i$ and $v_{i+1}$ that contain $u$ (see Figure~\ref{fig:empty_triangle}a). We prove by contradiction that $A=\emptyset$. Let $x$ be the highest vertex in $C^{v_i}_0\cap A$. Since $(u,v_i)$ and $(u,v_{i+1})$ exist in \graphname, $(u,v_i)$ is a visibility edge and any obstacles blocking the visibility between $u$ and $x$ has to have at least a vertex above $x$ in $C^{v_i}_0\cap A$. Therefore, $(u,x)$ is a visibility edge. By applying Lemma~\ref{lem:lem1} to $\triangle uv_ix$, we obtain a convex chain from $v_i$ to $x$. Since the neighbor of $v_i$ along the convex chain lies in $C^{v_i}_0$ and is visible to $v_i$, $u$ cannot be the closest visibility vertex of $v_i$, contradicting that $(u,v_i)\in$ \graphname. An analogous argument shows that $C^{v_{i+1}}_0\cap A$ is empty and hence, $A$ is empty.
	
	\begin{figure}[h!]
		\centering
		\subfloat[Area $A$ is empty.]{
			\includegraphics{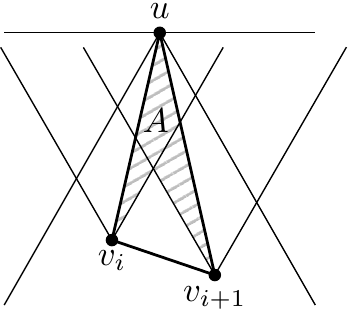}
		} \hspace{3em}
		\subfloat[Area $B$, $\triangle uv_iv_{i+1} \setminus A$, is empty.]{
			\includegraphics{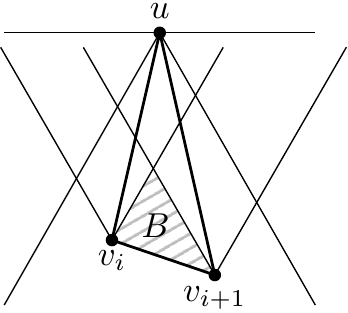}
		}
		\caption{Area $A$ and $B$ are both empty, hence $\triangle uv_iv_{i+1}$ is empty.}
		\label{fig:empty_triangle}
	\end{figure}
	
	Triangle $\triangle uv_iv_{i+1} \setminus A$ is denoted as area $B$. We show by contradiction that $B$ is also empty. Let $y\in B$ be the highest vertex in $B$ (see Figure~\ref{fig:empty_triangle}b). Since $A=\emptyset$ and no edges or obstacles pass through $(u,v_i)$ and $(u,v_{i+1})$, $(u,y)$ is a visibility edge that lies within $\overline{C^{u}_0}$. Since $y\in B\subseteq\triangle uv_iv_{i+1}$, $C^y_0$ lies within $(C^{v_i}_0\cup C^{v_{i+1}}_0\cup B)$. As $u$ is the closest visible vertex in both $C^{v_i}_0$ and $C^{v_{i+1}}_0$, $u$ is also the closest visible vertex in $C^y_0$ and therefore, $(u,y)$ exists in \graphname. This implies that $y$ is in the canonical sequence between $v_i$ and $v_{i+1}$, contradicting that $v_i$ and $v_{i+1}$ are consecutive. Hence, $B=\emptyset$.
	
	Since $A=\emptyset$ and $B=\emptyset$, $\triangle uv_iv_{i+1}=A\cup B=\emptyset$. Finally, since any obstacle being fully contained in $\triangle uv_iv_{i+1}$ would imply that it contains some vertices as well, it is empty of both vertices and obstacles. 
\end{proof}

\begin{lemma}\label{lem:canonpath}
	$G_{15}$ contains an edge between every pair of consecutive vertices on a canonical path.
\end{lemma}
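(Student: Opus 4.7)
My plan has three steps: show $(v_i, v_{i+1})$ is a visibility edge, show it belongs to $G_\infty$, and show it is retained by the pruning that produces $G_{15}$. The first step is immediate from Lemma~\ref{lem:canonpathempty}: since $\triangle u v_i v_{i+1}$ is empty of both vertices and obstacles, the segment $(v_i, v_{i+1})$ does not properly cross any obstacle.

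For the second step, I first note that $v_{i+1}$ can lie neither in $C^{v_i}_0$ nor in $\overline{C^{v_i}_0}$, because either would force one of $v_i, v_{i+1}$ to be strictly closer than $u$ to the other along the vertical bisector of the common cone, contradicting that both $(u,v_i)$ and $(u,v_{i+1})$ are in $G_\infty$. Hence, up to symmetry, $v_{i+1} \in C^{v_i}_k$ for some $k \in \{1,2\}$, and it suffices to show that $v_{i+1}$ is the closest visible vertex from $v_i$ in the subcone of $C^{v_i}_k$ containing it. Suppose not; let $w$ be a visible vertex in that subcone with smaller projection onto the bisector, so $w \in \bigtriangledown_{v_i}^{v_{i+1}}$. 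By Lemma~\ref{lem:canonpathempty}, $w \notin \triangle u v_i v_{i+1}$. I split on the position of $w$: if $w \in \bigtriangledown_{v_{i+1}}^u$, then Lemma~\ref{lem:lem1} applied to $\triangle v_i w v_{i+1}$ (with visibility edges $(v_i,w)$ and $(v_i,v_{i+1})$) yields a convex chain from $w$ to $v_{i+1}$ whose last vertex is directly visible from $v_{i+1}$ and closer to $v_{i+1}$ than $u$ along the bisector of $C^{v_{i+1}}_0$, contradicting $(u,v_{i+1}) \in G_\infty$. Otherwise $w$ lies in $\overline{C^u_0}$ at an angle from $u$ strictly between those of $v_i$ and $v_{i+1}$; picking such a $w$ with smallest projection onto the bisector of $\overline{C^u_0}$ (i.e., closest to $u$) and running an argument paralleling the second half of the proof of Lemma~\ref{lem:canonpathempty} shows $\bigtriangledown_w^u$ is empty, so $(u, w) \in G_\infty$ and $w$ lies in the canonical sequence strictly between $v_i$ and $v_{i+1}$, contradicting consecutiveness.

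For the third step, at $v_i$ the edge lies in the positive subcone $C^{v_i}_k$ and is automatically kept. At $v_{i+1}$ it lies in the negative subcone $\overline{C^{v_{i+1}}_k}$; by Lemma~\ref{lem:canonpathempty}, any other $G_\infty$-neighbor of $v_{i+1}$ in this subcone must lie on the far side of the line through $v_i$ and $v_{i+1}$ from $u$, so $(v_i, v_{i+1})$ is angularly extreme in $\overline{C^{v_{i+1}}_k}$ and is retained by the leftmost/rightmost pruning rule. The main obstacle is step~2, specifically the nested emptiness arguments in the two subcases for $w$, each of which mirrors the structure of the proof of Lemma~\ref{lem:canonpathempty} and requires a careful choice of the minimal $w$ and an application of Lemma~\ref{lem:lem1} to establish the requisite visibility edges.
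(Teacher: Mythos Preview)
Your three-step outline is sound and steps~1, 2a, and~3 are essentially correct (though in step~3 you also need planarity of $G_\infty$, not just Lemma~\ref{lem:canonpathempty}, to rule out a neighbor of $v_{i+1}$ that lies on the $u$-side of $v_iv_{i+1}$ but \emph{outside} $\triangle uv_iv_{i+1}$; the paper invokes both). The substantive issue is in step~2b.

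Your case~(a) argument does not work as written. You apply Lemma~\ref{lem:lem1} to $\triangle v_iwv_{i+1}$ with apex $v_i$ and claim the vertex of the convex chain adjacent to $v_{i+1}$ lies in $C^{v_{i+1}}_0$ and is closer than $u$. But the chain lives in $\triangle v_iwv_{i+1}$, which straddles $\overline{C^{v_{i+1}}_2}$ (where $v_i$ sits) and $C^{v_{i+1}}_0$ (where $w$ sits); the chain's last vertex before $v_{i+1}$ may well land in $\overline{C^{v_{i+1}}_2}$, giving no contradiction with $(u,v_{i+1})\in G_\infty$. The paper avoids this by a different setup: it works in the region $A$ between $C^{v_i}_0$ and $C^{v_{i+1}}_0$, selects a vertex $x$ with \emph{minimal angle} $\angle x u v_i$, shows $(u,x)$ is a visibility edge, and applies Lemma~\ref{lem:lem1} with $u$ as the apex. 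Since both $x$ and $u$ lie in $C^{v_i}_0$, the entire triangle $\triangle v_ixu$ is contained in that cone, so the chain neighbor of $v_i$ is automatically in $C^{v_i}_0$ and closer than $u$, contradicting $(u,v_i)\in G_\infty$. The choice of apex and the angular-minimality of $x$ are what make the argument go through.

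You also assert without justification that cases~(a) and~(b) exhaust $\bigtriangledown_{v_i}^{v_{i+1}}\setminus\triangle uv_iv_{i+1}$, and your case~(b) is only a sketch; showing that $\bigtriangledown_w^u$ is empty for your re-chosen $w$ requires knowing that the region between the two upward cones is already empty of visible vertices, which is precisely the content of the paper's region-$A$ argument that you have not yet established. The paper's proof is organized to handle exactly this dependency: it first clears region $A$ (outside $\triangle uv_iv_{i+1}$, in the upward cones), then region $B$ (inside $\triangle uv_iv_{i+1}$, via Lemma~\ref{lem:canonpathempty}), and only then shows the canonical triangle $C$ is empty by arguing that the highest vertex there would have $u$ as its closest visible vertex in $C_0$. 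Your case split is in the same spirit, but the order and the choice of apex in the Lemma~\ref{lem:lem1} applications need to follow the paper's pattern for the contradictions to land.
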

\begin{proof}
	We first prove that the edges on the canonical path are in \graphname. Let $v_1$ and $v_2$ be a pair of consecutive vertices in the canonical sequence in $\overline{C^u_0}$. Assume, without loss of generality, that $v_2\in\overline{C^{v_1}_2}$ and $v_1\in C^{v_2}_2$. Let $A$ be the area bounded by the positive subcones $C^{v_1}_0$ and $C^{v_2}_0$ that contain $u$ (see Figure~\ref{fig:G15Span}a). Let $A'$ be the set of vertices visible to $v_1$ or $v_2$ in $A$. We first show by contradiction that $A'=\emptyset$. According to Lemma~\ref{lem:canonpathempty}, $\triangle uv_1v_2$ is empty and thus does not contain any vertices. We focus on the part of $A$ to the left of $u v_1$. Consider vertex $x\in A$ to the left of $u v_1$ where $x$ has the smallest interior angle $\angle xuv_1$. Since $(u,v_1)$ is a visibility edge, the edge $(u,x)$ is a visibility edge, as any obstacle blocking it implies the existence of a vertex with smaller angle. Applying Lemma~\ref{lem:lem1} to $x u v_1$, there exists a convex chain from $v_1$ to $x$ in $\triangle v_1xu$. The neighbor of $v_1$ in the convex chain is a closer visible vertex than $u$ in $C^{v_1}_0$, contradicting that $(u,v_1)\in$ \graphname. An analogous argument for the region to the right of $u v_2$ contradicts that $(u,v_2)\in$ \graphname. Therefore, $A$ does not contain any vertices visible to $v_1$ or $v_2$.
	
	\begin{figure}[h!]
		\centering
		\subfloat[Area $A$ is bounded by $C^{v_1}_0$ and $C^{v_2}_0$.]{
			\includegraphics{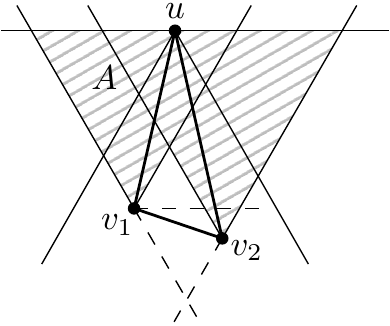}
		} \hspace{3em}
		\subfloat[Triangle $B$ is bounded by $\overline{C^{v_1}_1}$ and $C^{v_2}_0$.]{
			\includegraphics{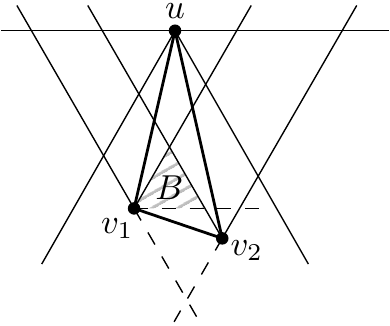}
		} \hspace{3em}
		\subfloat[Triangle $C$ is $\bigtriangledown_{v_1}^{v_2}$.]{
			\includegraphics{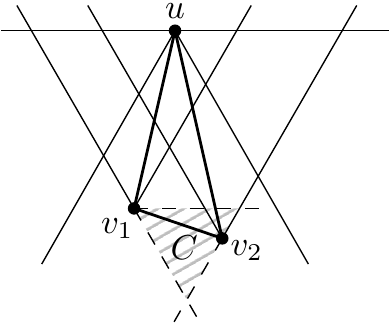}
		}
		\caption{A pair of consecutive vertices $v_1$ and $v_2$ along a canonical path and their surrounding empty areas.}
		\label{fig:G15Span}
	\end{figure}
	
	Next, let $B$ be the intersection of $\overline{C^{v_1}_1}$ and $\overline{C^{v_2}_2}$ (see Figure~\ref{fig:G15Span}b). Since $(v_1,v_2)$ is part of the canonical path of $\overline{C^u_0}$, by Lemma~\ref{lem:canonpathempty}, $\triangle uv_1v_2=\emptyset$. Since $B$ is contained in $\triangle uv_1v_2$, $B$ is also empty.
	
	Finally, let area $C$ be the canonical triangle $\bigtriangledown_{v_1}^{v_2}$ (see Figure~\ref{fig:G15Span}c). Let $C'$ be the set of visible vertices to $v_1$ and $v_2$ in $C$. We show that $C'=\emptyset$, by contradiction. Since both $A$ and $B$ are empty, the highest vertex $x\in C$ will have $u$ as the closest visible vertex in $C^x_0$. This implies that $x$ occurs between $v_1$ and $v_2$ in the canonical path, contradicting that they are consecutive vertices. Therefore, $C'=\emptyset$. 

	Since $A$, $B$, and $C$ are all empty, $v_2$ is the closest vertex of $v_1$ in the subcone of $C^{v_1}_2$ that contains it. By Lemma~\ref{lem:canonpathempty} it is visible to $v_1$ and thus, $(v_1, v_2)$ is an edge in \graphname. 
	
	It remains to show that $(v_1,v_2)$ is preserved in $G_{15}$. Since $\triangle uv_1v_2$ is empty and $G_{15}$ is plane (by Lemma~\ref{lem:G15P}), $v_1$ is the rightmost vertex in $\overline{C^{v_2}_2}$. Hence, it is not removed and thus, the canonical path in $\overline{C^u_0}$ exists in $G_{15}$.
\end{proof}

Now that we know that these canonical paths exist in $G_{15}$, we can proceed to prove that it is a spanner. 

\begin{lemma}\label{lem:G15spanning}
	$G_{15}$ is a $3$-spanner of \graphname.
\end{lemma}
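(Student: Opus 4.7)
The plan is to show that for every edge $(u,v) \in G_\infty$, there is a $u$-to-$v$ path in $G_{15}$ of length at most $3|uv|$; this yields $d_{G_{15}} \le 3\,d_{G_\infty}$ by concatenating such replacements along any shortest $G_\infty$-path. Fix $(u,v) \in G_\infty$ and, by swapping endpoints if necessary, assume $v \in \overline{C^u_0}$. If $(u,v)$ was kept by the pruning --- i.e., $v$ is the leftmost, the rightmost, or the closest neighbor of $u$ in $\overline{C^u_0}$ --- then the single edge already suffices. Otherwise $v$ appears as some intermediate $v_j$ in the canonical sequence $v_0, v_1, \ldots, v_k$ of $\overline{C^u_0}$, with $v_j \notin \{v_0, v_k, v_c\}$, where $v_c$ denotes the canonical vertex closest to $u$.

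By construction, $(u,v_0), (u,v_k), (u,v_c) \in G_{15}$, and by Lemma~\ref{lem:canonpath} the entire canonical path $v_0 v_1 \cdots v_k$ lies in $G_{15}$. This provides three candidate $u$-to-$v$ paths in $G_{15}$: $P_\ell = u, v_0, \ldots, v_j$; $P_r = u, v_k, \ldots, v_j$; and $P_c = u, v_c, \ldots, v_j$, traversing the canonical path in whichever direction reaches $v_j$. I want to show that the shortest of these three has length at most $3|uv|$. Since $|uv_c| \le |uv|$ (respectively $|uv_0|$ or $|uv_k|$ can be bounded by the angular position argument below), it suffices to bound the canonical sub-path used in the chosen candidate by $2|uv|$.

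The structural inputs I would use are: (i) by Lemma~\ref{lem:canonpathempty} every triangle $\triangle u v_i v_{i+1}$ along the canonical path is empty, so the chain is convex as seen from $u$ and bulges outward; (ii) by the argument inside the proof of Lemma~\ref{lem:canonpath}, each consecutive edge $v_i v_{i+1}$ has direction confined to the fixed $\pi/3$-wedge $\overline{C^{v_i}_2}$ (or its mirror), whose bisector is globally fixed; (iii) the total angular span at $u$ of the chain is at most $\pi/3$. Combining these, I would project each canonical edge onto the common wedge bisector, bound its length by a constant factor times its projection, and telescope across the sub-path; the sum of projections collapses to the projection of the straight-line displacement between the sub-path's endpoints, which by the triangle inequality is at most the sum of their distances from $u$.

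The main obstacle is that a direct estimate on $P_c$ may fall short when the canonical chain between $v_c$ and $v_j$ bulges far outward through some vertex $v_m$ with $|uv_m|$ much larger than $|uv|$, inflating the sub-path length beyond what the simple projection bound gives. My plan handles this via a case analysis on the angular position of $v_j$ relative to $v_c$ and to such outward bulges: when the chain from $v_c$ to $v_j$ is well-behaved, I use $P_c$ together with the projection bound above to conclude; when a significant bulge lies strictly between $v_c$ and $v_j$, I switch to whichever of $P_\ell$ or $P_r$ approaches $v_j$ from the opposite extreme of the canonical sequence, thereby avoiding the bulge entirely and obtaining the analogous bound on the retained sub-path. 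In every case, summing with the initial edge length yields $d_{G_{15}}(u,v) \le 3|uv|$, completing the proof.
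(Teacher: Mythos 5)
Your skeleton matches the paper's: reduce to a per-edge bound, and for an edge $(u,v)$ with $v\in\overline{C^u_0}$ follow the path from $u$ to the closest canonical vertex and then along the canonical path to $v$. The gap lies in how you treat the ``bulge'' you correctly single out as the crux. The paper does not dodge the bulge by switching paths; it proves the bulge cannot occur. Every vertex $x$ on the canonical path is a neighbour of $u$ in $G_\infty$, so $u$ is the closest visible vertex in $C^x_0$; if $x$ lay outside a fixed small region (the union $\triangle acv\cup\triangle a'cv$ in the paper's notation, determined by $v$ and the closest vertex), then $v$ or the closest vertex --- or an obstacle vertex blocking them --- would lie in $C^x_0$ closer to $x$ than $u$, contradicting $(u,x)\in G_\infty$. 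This confinement is exactly what makes the telescoped sum come out to $|uc|+2|cv|\le 3|uv|$. Your substitute --- falling back on $P_\ell$ or $P_r$ when a bulge sits between $v_c$ and $v_j$ --- does not work: the leftmost and rightmost neighbours $v_0$ and $v_k$ are angular extremes, not distance extremes, so $|uv_0|$ and $|uv_k|$ admit no bound in terms of $|uv_j|$, and the very first edge of $P_\ell$ or $P_r$ can already exceed $3|uv_j|$; moreover bulges could in principle sit on both sides of $v_j$, leaving no candidate.

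There are also two quantitative problems with the projection step as sketched. First, claim (ii) is unjustified: consecutive canonical vertices need not lie in a single fixed $\pi/3$-wedge of one another --- $v_{i+1}$ may lie in either of two adjacent cones of $v_i$ (compare Scenarios C and D in the degree analysis of $G_{10}$), a $2\pi/3$ range, so projecting onto one fixed bisector can cost a factor of $2$ per edge rather than $2/\sqrt{3}$. Second, even granting the $2/\sqrt{3}$ factor, bounding the sum of projections by $|v_cv_j|\le|uv_c|+|uv_j|$ gives a sub-path bound of roughly $(4/\sqrt{3})\,|uv_j|\approx 2.31\,|uv_j|$, exceeding the $2|uv_j|$ you declare sufficient (and $|uv_c|\le|uv_j|$ itself is not immediate, since ``closest'' is measured by projection onto the bisector). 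The paper sidesteps both issues by decomposing each canonical edge along the two boundary directions of $\overline{C^u_0}$ via the apex $w_i$ of $C^{v_{i-1}}_0\cap C^{v_i}_0$, telescoping the two families of components separately, and using the confinement region together with Lemma~\ref{lem:canonpathempty} and Lemma~\ref{lem:canonpath} to bound the two totals by $|uc|$ and $|cv|$ respectively. Without the confinement argument your proof cannot be completed along the lines proposed.
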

\begin{proof} 
	Consider an edge $(u,v)$ in \graphname and assume, without loss of generality, that $v\in\overline{C^u_0}$ and let $v_0$ be the closest vertex that $u$ has an edge to in $\overline{C^u_0}$. Consider the part of the canonical path between $v_0$ and $v$ and denote its vertices by $v_0, v_1, ..., v_k=v$ (see Figure~\ref{fig:G15SR}). The path from $u$ via the vertices of the canonical path to $v$ is an upper bound of $\delta(u,v)$, the length of the shortest path from $u$ to $v$ in $G_{15}$. 
	
	\begin{figure}[h!]
		\centering
		\includegraphics{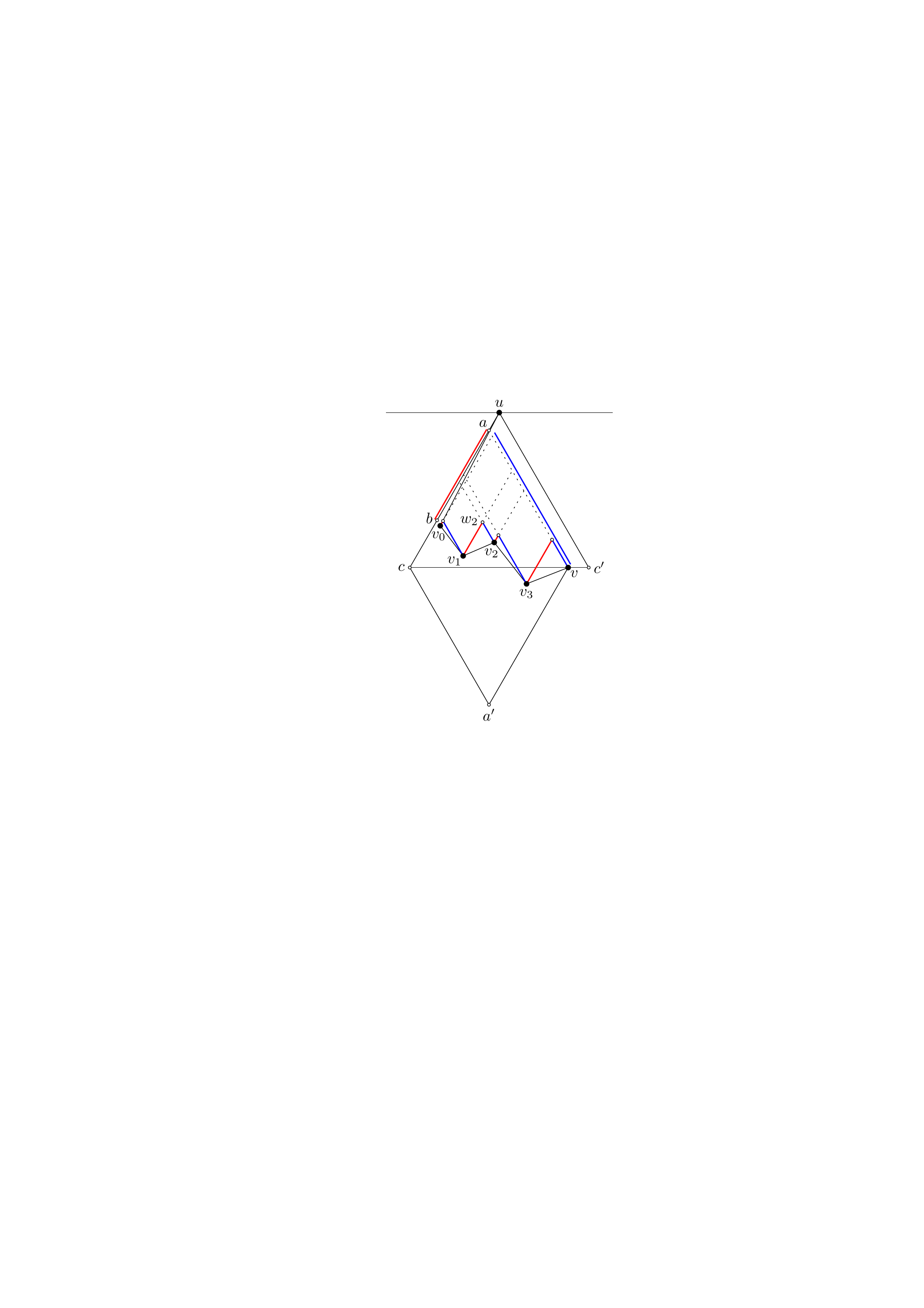}
		\caption{Negative cone $\overline{C^u_0}$ with canonical sequence $v_0, v_1, ..., v_k=v$.}
		\label{fig:G15SR}
	\end{figure}
	
	Let $c$ and $c'$ be the intersections of the cone boundaries of $\overline{C^u_0}$ with the line perpendicular to the bisector of $\overline{C^u_0}$ that intersects $v$. Let $a$ be the point on $uc$ such that $a v$ is parallel to $u c'$ and let $b$ be the point on $uc$ such that $b v_0$ is parallel to $u c'$. Consider the reflection $a'$ of $a$ over the line $c v$. We now show that no vertices on the canonical path between $v_0$ and $v$ lie outside $\triangle a'cv \cup \triangle acv$, since any vertex $x$, below $\triangle a'cv$ would have $v$ or $v_0\in C^x_0$, or occur before $v_0$ or after $v$ on the canonical sequence. If $x$ occurs before $v_0$ or after $v$ on the canonical sequence, it is not part of the canonical path between $v_0$ and $v$. If $v\in C^x_0$ is visible to $x$, it is closer than $u$ and $(u,x)$ could not exist in \graphname. If $v$ is not visible to $x$, then there exists a vertex $y$ of the obstacle $P$ blocking $v$ in $C^x_0$ (as $(u,x)$ needs to be visible for $x$ to lie on the canonical path of $u$, $P$ cannot block the cone entirely). Hence, $C^x_0$ contains vertices that are closer than $u$ and since the cone cannot be completely blocked, the vertex $z$ that minimizes the angle $\angle u x z$ is visible to $x$. Consequently, $(u,x)$ could not exist in \graphname. An analogous argument shows that if $v_0\in C^x_0$, $(u,x)$ does not exist. 

	It remains to bound the length of the canonical path. By triangle inequality, $|u v_0|\leq|ub|+|b v_0|$. For every consecutive pair of vertices $v_{i-1}, v_i$, let $w_i$ be the intersection of the right cone boundary of $C^{v_{i-1}}_0$ and the left cone boundary of $C^{v_i}_0$ (see Figure~\ref{fig:G15SR}). By triangle inequality, we have that $|v_{i-1} v_i|\leq\color{red}|v_{i-1} w_i|\color{black}+\color{blue}|w_i v_i|\color{black}$. Since all line segments of the form $v_{i-1} w_i$ are parallel to $a c$ and all line segments of the form $w_i v_i$ are parallel to $a v$, summing them up gives $\color{red}|a b|\color{black}-|b v_0|+\color{blue}|a v|\color{black}$. Since $v_0$ lies in $\triangle ucc'$, $|a b|\leq|u c|$. By construction $\triangle ucc'$ is an equilateral triangle and since $av$ is parallel to $u c'$, $\triangle acv$ is also an equilateral triangle and $|a c|=|c v|=|a v|$. 
	
	Adding the upper bound on $|u v_0|$ to the length of the canonical path, we get $\delta(u,v)\leq|u b|+|b v_0|+|a b|+|a v|-|b v_0|$. Since $|a b|\leq|a c|$, $|u b|\leq|u c|$, and $|a c|=|c v|=|a v|$, we can rewrite this to $\delta(u,v)\leq|u c|+2\cdot|c v|$. Using basic trigonometric functions, we get that $|u c| = (\cos \angle vuc + \sin \angle vuc / \sqrt{3}) \cdot |u v|$ and $|c v| = 2 \cdot \sin \angle vuc / \sqrt{3} \cdot |u v|$. This implies that $\delta(u,v)\leq(\cos{\angle vuc}+\frac{5}{\sqrt{3}} \cdot \sin{\angle vuc}) \cdot |u v|$ where $0<\angle vuc\leq\frac{\pi}{3}$. As this is an increasing function, it is maximal at $\angle vuc = \frac{\pi}{3}$, where it attains value 3. Therefore, $G_{15}$ is a $3$-spanner of \graphname.
\end{proof}

Since by Lemma~\ref{lem:Ginftyspanning}, \graphname is a $2$-spanner of the visibility graph and by Lemma~\ref{lem:G15spanning}, $G_{15}$ is a $3$-spanner of \graphname, we obtain the main result of this section.

\begin{theorem}
	$G_{15}$ is a plane $6$-spanner of the visibility graph of degree at most 15. 
\end{theorem}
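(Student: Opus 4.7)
The plan is to observe that the theorem follows directly by assembling the three ingredients already established in this section: the degree bound lemma, the planarity lemma (Lemma~\ref{lem:G15P}), and the spanning result for $G_{15}$ relative to $G_\infty$ (Lemma~\ref{lem:G15spanning}), together with the earlier spanning result for $G_\infty$ relative to the visibility graph (Lemma~\ref{lem:Ginftyspanning}). No new geometric or combinatorial argument is needed.

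First, I would dispose of the degree and planarity claims in a single sentence each: the bound of $15$ on the maximum degree is exactly the statement of the degree lemma proved earlier in this section, and planarity is the content of Lemma~\ref{lem:G15P}. Both can be cited verbatim without further work.

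The substantive step is the spanning ratio. My approach is the standard composition argument for spanners. Fix any two vertices $u,v$ with $(u,v)$ an edge of the visibility graph; by definition of spanning ratio it suffices to bound $d_{G_{15}}(u,v)$ by $6 \cdot |uv|$. By Lemma~\ref{lem:Ginftyspanning} there exists a path $u = x_0, x_1, \ldots, x_\ell = v$ in $G_\infty$ whose total length is at most $2 \cdot |uv|$. Applying Lemma~\ref{lem:G15spanning} to each edge $(x_{i-1}, x_i)$ of this $G_\infty$-path yields a path in $G_{15}$ of length at most $3 \cdot |x_{i-1} x_i|$; concatenating these gives a walk in $G_{15}$ from $u$ to $v$ of length at most $3 \cdot \sum_i |x_{i-1} x_i| \leq 3 \cdot 2 \cdot |uv| = 6 \cdot |uv|$. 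Extending from single visibility edges to arbitrary pairs of vertices is immediate by summing over the edges of a shortest visibility path.

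There is essentially no obstacle here: the heavy lifting has been done in the preceding lemmas, and what remains is the routine observation that a $t_1$-spanner of a $t_2$-spanner is a $(t_1 t_2)$-spanner. The only thing to be mildly careful about is phrasing the composition so that it is clear we are measuring against the visibility graph rather than against $G_\infty$, but this is purely a bookkeeping matter.
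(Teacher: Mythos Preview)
Your proposal is correct and matches the paper's approach exactly: the paper simply states that the theorem follows from combining Lemma~\ref{lem:Ginftyspanning} ($G_\infty$ is a $2$-spanner of the visibility graph) with Lemma~\ref{lem:G15spanning} ($G_{15}$ is a $3$-spanner of $G_\infty$), together with the earlier degree and planarity lemmas. You have spelled out the spanner-composition step in slightly more detail than the paper does, but there is no difference in substance.
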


\section{Improving the Analysis}
Observing that we need to maintain only the canonical paths between vertices, along with the edge to the closest vertex in each negative subcone for the proof of Lemma~\ref{lem:G15spanning} to hold, we reduce the degree as follows. Consider each negative subcone $\overline{C^u_i}$ of every vertex $u$ and keep only the edge incident to the closest vertex (with respect to the projection onto the bisector of the cone) and the canonical path in $\overline{C^u_i}$ (see Figure~\ref{fig:G10}). We show that the resulting graph, denoted $G_{10}$, has degree at most 10. Recall that at most one cone per vertex is split into two subcones and thus this cone can have two disjoint canonical paths, as we argue in Lemma~\ref{lem:G10Degree}. 

\begin{figure}[h!]
	\centering
	\includegraphics{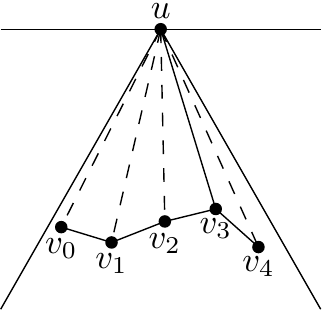}
	\caption{$G_{10}$ keeps only the canonical path and the edge to the closest adjacent vertex.}
	\label{fig:G10}
\end{figure}

As in the previous section, we proceed to prove that $G_{10}$ is a plane 6-spanner of the visibility graph of degree 10. Since $G_{10}$ is a subgraph of $G_{15}$, it is also plane. Furthermore, since the canonical path is maintained, the spanning property does not change. 

\begin{lemma}\label{lem:G10Plane}
	$G_{10}$ is a plane graph.
\end{lemma}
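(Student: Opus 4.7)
The plan is to prove the lemma in essentially two lines by showing $G_{10}$ is an edge subgraph of $G_{15}$ and then invoking Lemma~\ref{lem:G15P}. Since planarity is preserved under edge deletion, this is enough.

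To verify the subgraph containment, I would inspect the two kinds of edges retained in $G_{10}$ by its construction. First, in each negative subcone $\overline{C^u_i}$ of every vertex $u$, $G_{10}$ keeps the edge from $u$ to the closest vertex in that subcone; this is one of the three edges (closest, leftmost, rightmost) that $G_{15}$ explicitly preserves in the same subcone. Second, $G_{10}$ keeps the canonical path in $\overline{C^u_i}$, that is, the edges between consecutive vertices of the canonical sequence; by Lemma~\ref{lem:canonpath} these edges are already present in $G_{15}$. All positive cone edges are untouched by both constructions, so no further case needs to be checked.

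Consequently every edge of $G_{10}$ is an edge of $G_{15}$, giving $G_{10}\subseteq G_{15}$, and Lemma~\ref{lem:G15P} immediately yields planarity of $G_{10}$. There is no real obstacle: the only step that is not completely immediate is the observation that the canonical path survives in $G_{15}$, but that is precisely the content of Lemma~\ref{lem:canonpath} and I would simply cite it rather than reprove it.
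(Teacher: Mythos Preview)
Your proposal is correct and follows essentially the same route as the paper: show $G_{10}\subseteq G_{15}$ by observing that the closest-vertex edge is among the three edges $G_{15}$ retains in each negative subcone and that the canonical-path edges lie in $G_{15}$ by Lemma~\ref{lem:canonpath}, then invoke Lemma~\ref{lem:G15P}. The only minor quibble is the remark that ``positive cone edges are untouched by both constructions'': every edge of $G_\infty$ lies in a positive cone of one endpoint and a negative cone of the other, so there is no separate class of positive-cone edges to check; this sentence can simply be dropped without affecting the argument.
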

\begin{proof}
	Lemma~\ref{lem:canonpath} shows that $G_{15}$ contains an edge between consecutive vertices along each canonical path. Since $G_{10}$ consists of these canonical paths and the edge to the closest vertex, all of its edges are part of $G_{15}$. By Lemma~\ref{lem:G15P}, $G_{15}$ is a plane graph and thus $G_{10}$ is also plane.
\end{proof}

\begin{lemma}\label{lem:G10Spanning}
	$G_{10}$ is a $3$-spanner of \graphname.
\end{lemma}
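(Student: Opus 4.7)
The plan is to observe that the proof of Lemma~\ref{lem:G15spanning} only ever used two types of edges, both of which are retained by $G_{10}$, so the entire bound carries over verbatim. Recall that in that earlier proof we took an edge $(u,v)$ of $G_\infty$ with $v\in\overline{C^u_0}$, let $v_0$ be the closest vertex that $u$ has an edge to in $\overline{C^u_0}$, and bounded $\delta(u,v)$ by the length of the path that traverses the edge $(u,v_0)$ followed by the portion of the canonical path from $v_0$ to $v=v_k$.

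First I would verify that this path survives in $G_{10}$. The edge $(u,v_0)$ is present because $G_{10}$ keeps, by construction, the edge to the closest vertex in every negative subcone. The remaining edges $(v_{i-1},v_i)$ all lie on the canonical path of $\overline{C^u_0}$, and $G_{10}$ retains every canonical path in full; by Lemma~\ref{lem:canonpath} these edges also exist in $G_\infty$ (and thus in $G_{15}$), so $G_{10}$ genuinely contains them as edges rather than mere sequences of vertices. Note that if an obstacle splits $\overline{C^u_0}$ into two subcones, then $v_0$ and $v$ necessarily lie in the same subcone (both are visible from $u$ on the same side of the obstacle), so the canonical path connecting them is a single contiguous piece, and there is no issue with the two disjoint canonical paths mentioned before Lemma~\ref{lem:G10Plane}.

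With the path established, I would invoke the identical trigonometric estimate from Lemma~\ref{lem:G15spanning}: bounding $|uv_0|\le|ub|+|bv_0|$ and each $|v_{i-1}v_i|\le|v_{i-1}w_i|+|w_iv_i|$, summing along the canonical path, and using that the components along $ac$ and $av$ telescope to give $\delta(u,v)\le|uc|+2\cdot|cv|=(\cos\angle vuc+\tfrac{5}{\sqrt 3}\sin\angle vuc)\cdot|uv|$, which attains its maximum value $3$ at $\angle vuc=\pi/3$.

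The main (very mild) obstacle is really just bookkeeping: confirming that the subpath of the canonical path from $v_0$ to $v$ used in Lemma~\ref{lem:G15spanning} is preserved edge-for-edge in $G_{10}$, even in the presence of a subcone-splitting obstacle. Once that is observed, no new geometric argument is needed, and the spanning ratio of $3$ against $G_\infty$ follows immediately from the earlier calculation.
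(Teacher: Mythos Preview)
Your proposal is correct and follows essentially the same approach as the paper: you observe that the $3$-spanning paths used in Lemma~\ref{lem:G15spanning} consist solely of the edge to the closest vertex in the negative subcone together with edges of the canonical path, both of which are retained in $G_{10}$ by construction, so the bound carries over unchanged. Your write-up is in fact more detailed than the paper's own two-line argument, but the underlying idea is identical.
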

\begin{proof}
	According to Lemma~\ref{lem:G15spanning}, the canonical paths in $G_{15}$ are the $3$-spanning paths for each edge in \graphname. Since $G_{10}$ preserves these canonical paths from $G_{15}$, these same paths in $G_{10}$ still serve as $3$-spanning paths for each edge in \graphname. Thus, $G_{10}$ is a 3-spanner of \graphname.
\end{proof}

It remains to upper bound the maximum degree of $G_{10}$. We do this by charging the edges incident to a vertex to its cones. The four part charging scheme is described below (see Figure~\ref{fig:G10Charges}). Scenarios A and B handle the edge to the closest vertex, while Scenarios C and D handle the edges along the canonical path. Hence, the total charge of a vertex is an upper bound on its degree. 
	
	\begin{figure}[h!]
		\centering
		\subfloat[Scenario A.]{
			\includegraphics{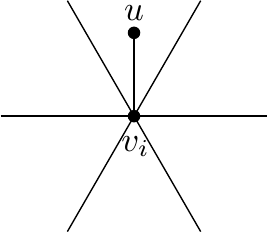}
		} \hspace{3em}
		\subfloat[Scenario B.]{
			\includegraphics{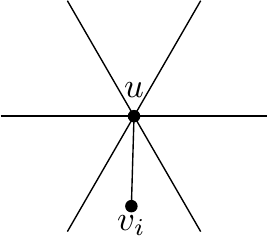}
		} \hspace{3em}
		\subfloat[Scenario C.]{
			\includegraphics{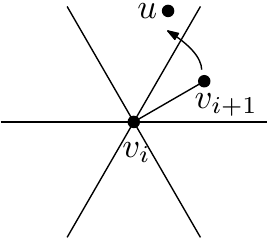}
		} \hspace{3em}
		\subfloat[Scenario D.]{
			\includegraphics{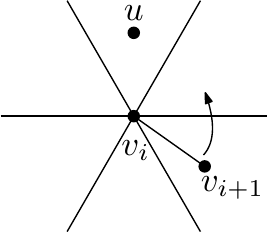}
		}
		\caption{The four scenarios of the charging scheme. The arrows in Scenarios C and D indicate the cone these edges are charged to.}
		\label{fig:G10Charges}
	\end{figure}

\textbf{Scenario A:} Edge $(u,v_i)$ lies in $C^{v_i}_j$ and $v_i$ is the closest vertex of $u$ in $\overline{C^{u}_j}$. Then $(u,v_i)$ is charged to $ C^{v_i}_j$. 

\textbf{Scenario B:} Edge $(u,v_i)$ lies in $\overline{C^u_j}$ and $v_i$ is the closest vertex of $u$ in $\overline{C^u_j}$. Then $(u,v_i)$ is charged in $\overline{C^u_j}$. 

\textbf{Scenario C:} Edge $(v_i,v_{i+1})$ lies on a canonical path of $u$, with $u\in C^{v_i}_j$, and $v_{i+1}\in\overline{C^{v_i}_{j+1}}$. Edge $(v_i,v_{i+1})$ is charged to $C^{v_i}_j$. Similarly, edge $(v_i,v_{i-1})$ on a canonical path of $u$, where $v_{i-1}\in\overline{C^{v_i}_{j-1}}$, is charged to $C^{v_i}_j$. 

\textbf{Scenario D:} Edge $(v_i,v_{i+1})$ lies on a canonical path of $u$, with $u\in C^{v_i}_j$, and $v_{i+1}\in C^{v_i}_{j-1}$. Edge $(v_i,v_{i+1})$ is charged to $\overline{C^{v_i}_{j+1}}$. Similarly, edge $(v_i,v_{i-1})$ on a canonical path of $u$, where $v_{i-1}\in C^{v_i}_{j+1}$, is charged to $\overline{C^{v_i}_{j-1}}$.

We note that every edge in $G_{10}$ is charged to both of its endpoints. Scenarios A and B consider an edge formed and preserved as the shortest edge, i.e., edge $(u,v_3)$ in Figure~\ref{fig:G10}. Scenarios C and D look at edges on canonical paths, where Scenario C handles edges in negative subcones and Scenario D takes care of the positive ones.

\begin{lemma}\label{lem:G10Degree}
	The degree of every vertex in $G_{10}$ is at most 10.
\end{lemma}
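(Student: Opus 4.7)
The plan is to bound the total charge received by $u$ across all its (sub)cones, which upper-bounds $\deg(u)$ by the design of the charging scheme. I would analyze each scenario's contribution cone by cone.

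For Scenarios A and B (closest-vertex edges), I would argue the following. Scenario B contributes exactly one charge per negative (sub)cone of $u$, namely the uniquely defined closest-vertex edge kept by the construction. Scenario A contributes at most one charge per positive (sub)cone of $u$, via the observation that at most one vertex $a \in C^u_j$ can admit an edge to $u$ in $G_\infty$: such an edge must be added from $u$'s side, because $u$ lies only in $a$'s negative cone $\overline{C^a_j}$ (not in any positive cone of $a$), and $u$ adds exactly one edge in $C^u_j$, to the closest vertex. So together, Scenarios A and B contribute at most one charge per cone.

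For Scenarios C and D (canonical-path edges), the key observation is that $u$ lies on the canonical path of $w$'s negative cone $\overline{C^w_j}$ only if $(u,w) \in G_\infty$ with $w \in C^u_j$. By the uniqueness argument above, at most one such $w$ exists per positive cone of $u$, giving at most three canonical paths through $u$. Each path contributes at most two edges incident to $u$ (to $u$'s predecessor and successor in the canonical sequence), bounding the total Scenario-C-or-D charges at $u$ by six. I would then refine this by a case analysis on where the predecessor and successor of $u$ sit relative to $u$'s cones, using the cone relations established in the proof of Lemma~\ref{lem:canonpath}, to show that the charges distribute across $u$'s (sub)cones in a controlled fashion (each path produces at most one Scenario C charge to a specific positive cone and at most one Scenario D charge to a specific adjacent negative cone).

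Aggregating the per-cone bounds, I would verify that for a vertex $u$ not on any obstacle the total charge sums to at most $9$. For the obstacle case, I would use that each vertex belongs to at most one obstacle $P_u$, which splits at most one of $u$'s cones into two subcones; the extra subcone contributes at most one additional charge by the same per-subcone analysis, giving the final bound of $10$.

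The main obstacle will be the bookkeeping for Scenarios C and D: the canonical-path neighbors of $u$ can lie in several possible cones of $u$, and the case analysis must tightly account for each configuration (including the coincidences where a canonical-path edge at $u$ also happens to coincide with a closest-vertex edge) to ensure the per-cone charges sum to at most $10$ without over-counting.
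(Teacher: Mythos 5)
Your charging skeleton matches the paper's, but the step you defer to ``a case analysis'' is exactly where the proof lives, and the one concrete claim you make about that step is false. A single canonical path of $\overline{C^w_j}$ passing through a vertex can contribute \emph{two} Scenario~C charges to the same positive cone: this happens whenever both the predecessor and the successor on that path lie in the two adjacent negative subcones $\overline{C^{v_i}_{j-1}}$ and $\overline{C^{v_i}_{j+1}}$. This configuration is not only possible, it is why $G_{10}$ has degree $10$ rather than $7$, and it is precisely what the later transformation to $G_7$ is built to eliminate. So ``each path produces at most one Scenario~C charge to a specific positive cone'' cannot be the basis of the count, and with it removed your naive tally is $3+3+6=12$, not $9$.

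What the paper actually establishes, and what is absent from your outline, are two mutual-exclusivity facts. First, a negative subcone charged by Scenario~D cannot also be charged by Scenario~B: if $(v_i,v_{i+1})$ is a Scenario~D edge charged to $\overline{C^{v_i}_{j+1}}$, that subcone splits into the part inside $\triangle u v_i v_{i+1}$, which is empty by Lemma~\ref{lem:canonpathempty}, and the part outside, which $v_i$ cannot reach by an edge without crossing $(u,v_{i+1})$, contradicting planarity (Lemma~\ref{lem:lem3}); hence no closest-vertex edge of $v_i$ lives there. Second, a positive cone charged by Scenario~C cannot also be charged by Scenario~A: a canonical-path neighbour $x$ of $v_i$ in an adjacent negative subcone lies above the horizontal through $v_i$, witnessing that $v_i$ is not the closest vertex of $u$ in $\overline{C^u_j}$, so the Scenario~A edge $(u,v_i)$ is not retained in $G_{10}$. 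Only with both facts does one get at most $2$ per positive cone (the maximum of the A-count and the C-count, the latter being $2$) and at most $1$ per negative subcone, for $3\cdot 2+3\cdot 1=9$ without obstacles. Your obstacle step is also under-justified: if the split cone is positive, each of its two subcones could a priori receive two Scenario~C charges, giving $4$ for that cone and a total of $11$. The paper excludes this by showing that the obstacle edge incident to $v_i$ blocks visibility to one side, forcing $v_i$ to be an endpoint of the relevant canonical path, so each subcone of a split positive cone receives at most one Scenario~C charge and the split cone still totals at most $2$.
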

\begin{proof}
	\textbf{Negative subcones:} We first prove that there can be only one edge charged to each negative subcone. According to the charging scheme, only Scenario B and D charge edges to a negative subcone. We first consider Scenario B, where $u$ is the closest vertex of $v_i$ in its negative subcone $\overline{C^{v_i}_j}$. By definition of $G_{10}$, edge $(u,v_i)$ is the only edge preserved and charged $\overline{C^{v_i}_j}$ by this scenario. If $v_i$ is part of a polygonal obstacle $P$ and $P$ splits the negative cone (see Figure~\ref{fig:G10_Degree_B}), the number of negative subcones increases. Both negative subcones are charged 1 if they both have an edge belonging to scenario B. 
	
	\begin{figure}[h!]
		\centering
		\includegraphics{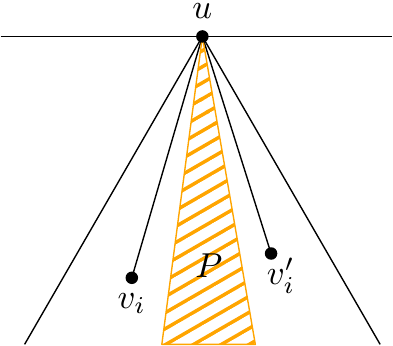}
		\caption{Two negative subcones can both be charged once from Scenario B.}
		\label{fig:G10_Degree_B}
	\end{figure}
	
	For Scenario D, the edge $(v_i,v_{i+1})$ lies in $C^{v_i}_{j-1}$ and is charged to $\overline{C^{v_i}_{j+1}}$. Since $(v_i,v_{i+1})$ is an edge in the canonical path of $\overline{C^u_{j}}$, there is only one edge from this canonical path in $C^{v_i}_{j-1}$ with $v_i$ as an endpoint. Next, we show that $\overline{C^{v_i}_{j+1}}$ is not charged by Scenario B, implying that its total charge is $1$. Consider triangle $\triangle uv_iv_{i+1}$.  Since this triangle was present in $G_\infty$, by Lemma~\ref{lem:canonpathempty} it contains no polygonal obstacles. We split $\overline{C^{v_i}_{j+1}}$ into two parts: $A$ lies outside $\triangle uv_iv_{i+1}$ and $B$ lies inside $\triangle uv_iv_{i+1}$ (see Figure~\ref{fig:G10_Degree_D}). 
	
	\begin{figure}[h!]
		\centering
		\includegraphics{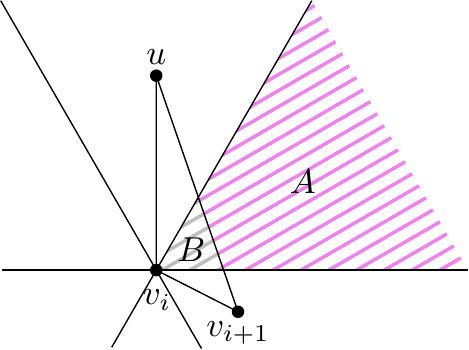}
		\caption{Area $A$ is shaded in gray and area $B$ is shaded in violet.}
		\label{fig:G10_Degree_D}
	\end{figure}
	
	Consider area $A$. Since edge $(v_{i+1},u)$ existed in \graphname, which by Lemma~\ref{lem:lem3} \graphname is plane, and $G_{10}$ is a subgraph of \graphname, there can be no edges between $v_i$ and any vertex in $A$. Thus, $v_i$ cannot be charged by Scenario B for any vertex in $A$. 
	
	Next, consider area $B$. Since $(v_i,v_{i+1})$ is on the canonical path of $\overline{C^{u}_j}$, both $(u,v_i)$ and $(u,v_{i+1})$ exist in \graphname by Lemma~\ref{lem:canonpath}. According to Lemma~\ref{lem:canonpathempty}, $\triangle uv_iv_{i+1}$ is empty and thus $B$ is empty as well. Thus, $v_i$ cannot be charged by Scenario B for any vertex in $B$ either. 
	
	Finally, since $B$ is empty and no polygonal obstacle can cross any side of $\triangle uv_iv_{i+1}$ (because of the planarity of \graphname), no edges can be charged by Scenario D if $v_i$ is a vertex of a polygonal obstacle $P$ with polygon boundary in $\overline{C^{v_i}_{j+1}}$. 

	Thus, if a negative subcone is charged with an edge by Scenario D, it cannot be charged with any edge by Scenario B. Conversely, if a negative subcone is charged with an edge by Scenario B, then it cannot be charged with any Scenario D edges either. Since both scenario B and D only charge one edge to the negative subcone and there are three negative cones (disregarding obstacles), there are at most three edge charged in negative subcones for each vertex. Taking polygonal obstacles into account and using the assumption that a vertex is part of at most one obstacle, there are at most four subcones in total and thus at most four edges charged by negative subcones to a vertex.
	
	\textbf{Positive subcones:} We shift our attention to the positive subcones and prove that each positive subcone, $C^u_j$, is charged for at most $2$ edges. As indicated in the charging scheme, only Scenarios A and C charge edges to positive subcones. 
	
	Scenario A charges at most $1$ charge to each positive subcone, since by construction, \graphname only adds one edge in each positive subcone of $v_i$ and $G_{10}$ does not add additional edges. 
	
	Scenario C charges at most $2$ edges to each positive subcone $C^{v_i}_j$, one for each of its neighbouring negative subcones $\overline{C^{v_i}_{j-1}}$ and $\overline{C^{v_i}_{j+1}}$. By the definition of the canonical path, every vertex $v_i$ along the path has at most one edge in $\overline{C^{v_i}_{j+1}}$ (and at most one in $\overline{C^{v_i}_{j-1}}$) that belongs to the canonical path of subcone $\overline{C^{u}_j}$. 
	
	The above two arguments show that every positive subcone has charge at most 3. Using a more careful analysis, we can show that each positive cone is charged by Scenario A or by Scenario C, but not by both. By definition, charges from Scenario A apply when $(u,v_i)$ is the shortest edge of $u$ in $\overline{C^{u}_j}$. Hence, if there exists a vertex $x$ in $\overline{C^{v_i}_{j-1}}$ or $\overline{C^{v_i}_{j+1}}$ such that $(x,v_i)$ is an edge in the canonical path, $v_i$ cannot be the closest vertex in $\overline{C^{u}_j}$ as $x$ lies above the horizontal line through $v_i$. Thus, $(u,v_i)$ would not be preserved during the construction of $G_{10}$. Therefore, if a positive subcone is charged for edge by Scenario A, it cannot have any edges charged by Scenario C. Conversely, if a positive subcone is charged by Scenario C, it cannot be charged with edges from Scenario A.
	
	Hence, a cone that is not affected by obstacles has at most $2$ charges. Next, we look at what changes when $v_i$ lies on an obstacle $P$. If $P$ does not split a cone, the above arguments still apply. Hence, we focus on the case where $P$ splits a positive cone $C^{v_i}_j$ into two subcones. Let $u$ be the closest vertex of $v_i$ in the right subcone of $C^{v_i}_0$ and let $u'$ be the closest vertex in its left subcone (see Figure~\ref{fig:G10_Degree_C}). 
	
	\begin{figure}[h!]
		\centering
		\includegraphics{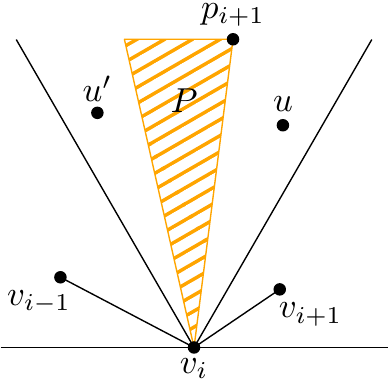}
		\caption{Two positive subcones of $v_i$ are charged for one edge by Scenario C.}
		\label{fig:G10_Degree_C}
	\end{figure}	
	
	By construction, the maximum number of charges from edges fitting Scenario A stays the same, $1$ per subcone. However, for edges from Scenario C, we show that the maximum charge per subcone reduces to $1$. Let $p_{i+1}$ be the vertex following $v_i$ on the boundary of $P$ on the side of $u$. Since $(u,v_i)$ is an edge of \graphname, $p_{i+1}$ is further from $v_i$ than $u$ is (or equal if $p_{i+1}=u$), as otherwise $p_{i+1}$ would be a closer visible vertex than $u$, contradicting the existence of $(u,v_i)$. For any vertex $v_{i-1}$ in $\overline{C^{v_i}_2}$ or $C^{v_i}_1$, $u$ is not visible since $(v_i,p_{i+1})$ blocks visibility and therefore $(v_{i-1},v_i)$ cannot be part of the canonical path in $\overline{C^u_0}$. We note that if $u = p_{i+1}$, $v_{i-1}$ may be visible to $u$, but in this case it lies in a different subcone of $u$ and hence would also not be part of this canonical path. Hence, $v_i$ as the leftmost vertex of the canonical path of $\overline{C^{u}_0}$, and using an analogous argument, $v_i$ is the rightmost vertex of the canonical path of $\overline{C^{u'}_0}$. Therefore, these positive subcones cannot be charged more than $1$ each by Scenario C. Using the fact that Scenario A and C cannot occur at the same time, we conclude that each positive cone is charged at most 2, regardless of whether an obstacle splits it.

	Thus, each positive cone is charged at most 2 and each negative subcone is charged at most 1. Since there are 3 positive cones and at most 4 negative subcones, the total degree bound is 10.
\end{proof}

Putting the results presented in this section together, we obtain the following theorem. 

\begin{theorem}
	$G_{10}$ is a plane $6$-spanner of the visibility graph of degree at most 10. 
\end{theorem}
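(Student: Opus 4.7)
The plan is to assemble this theorem as a direct combination of the three lemmas proved earlier in this section with the $2$-spanner result for \graphname from the previous section. The statement bundles three properties---planarity, degree bound, and spanning ratio---each of which has already been established in its own lemma, so the proof will essentially be a citation chain.

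First, I would invoke Lemma~\ref{lem:G10Plane} for planarity and Lemma~\ref{lem:G10Degree} for the degree bound of $10$. Neither requires any additional argument: $G_{10}$ is a subgraph of $G_{15}$ (hence of \graphname), and the charging scheme in Lemma~\ref{lem:G10Degree} already accounts for the possibility of an obstacle splitting a cone.

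For the $6$-spanner claim, I would compose the two spanning inequalities. For any pair of vertices $x$ and $y$, by summing Lemma~\ref{lem:Ginftyspanning} edge by edge along a shortest path from $x$ to $y$ in $Vis(V,S)$ we obtain $d_{\graphname}(x,y) \leq 2 \cdot d_{Vis(V,S)}(x,y)$, and then Lemma~\ref{lem:G10Spanning} gives $d_{G_{10}}(x,y) \leq 3 \cdot d_{\graphname}(x,y)$. Chaining these two inequalities yields $d_{G_{10}}(x,y) \leq 6 \cdot d_{Vis(V,S)}(x,y)$, which is exactly the claimed spanning ratio.

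Since every ingredient is already proved, I do not anticipate any genuine obstacle; the only point requiring minor care is applying each spanner inequality at the level of arbitrary vertex pairs rather than just single edges, but this is the standard transitivity-of-spanners argument.
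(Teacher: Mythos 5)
Your proposal is correct and matches the paper's approach exactly: the theorem is obtained by citing Lemma~\ref{lem:G10Plane} for planarity, Lemma~\ref{lem:G10Degree} for the degree bound, and composing Lemma~\ref{lem:Ginftyspanning} with Lemma~\ref{lem:G10Spanning} to get the spanning ratio $2\cdot 3 = 6$. Your remark about lifting the edge-wise spanning inequalities to arbitrary vertex pairs via the standard transitivity argument is the right (and only) point of care.
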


\section{Shortcutting to Degree 7}
In order to reduce the degree bound from $10$ to $7$, we ensure that at most $1$ edge is charged to each subcone. According to Lemma~\ref{lem:G10Degree}, the negative subcones and the positive subcones created by splitting a cone into two already have at most $1$ charge. Hence, we only need to reduce the maximum charge of positive cones that are not split by an obstacle from $2$ to $1$. The two edges charged to a positive cone $C^v_i$ come from edges in the adjacent negative subcones $\overline{C^v_{i-1}}$ and $\overline{C^v_{i+1}}$ (Scenario C), whereas cone $C^v_i$ itself does not contain any edges. Hence, in order to reduce the degree from 10 to 7, we need to resolve this situation whenever it occurs in $G_{10}$. We do so by performing a transformation described below when a positive cone is charged twice for Scenario C. The graph resulting from applying this transformation to every applicable vertex is referred to as $G_7$. 

Let $u$ be a vertex in $G_{10}$ and let $v$ be a vertex on its canonical path (in $\overline{C^u_i}$) whose positive cone is charged twice. Let $(x,v)$ and $(v,y)$ be the edges charged to $\overline{C^u_i}$ and assume without loss of generality that $x$ occurs on the canonical path from $u$ to $y$ (see Figure~\ref{fig:G7}a). If neither $x$ nor $y$ is the closest vertex in the respective negative subcone of $v$ that contains them, we remove $(v,y)$ from the graph and add $(x,y)$ (see Figure~\ref{fig:G7}b). This reduces the degree of $v$, but may cause a problem at $x$. In order to solve this, we consider the neighbor $w$ of $x$ on the canonical path of $v$. Since $x$ is not the closest vertex in $v$'s subcone, this neighbor exists. If $w\in\overline{C^x_i}$ and $w$ is not the closest vertex of $x$ in the subcone that contains it, we remove $(x,w)$ (see Figure~\ref{fig:G7}c). 
	
	\begin{figure}[h!]
		\centering
		\subfloat[The situation before the transformation.]{
			\includegraphics{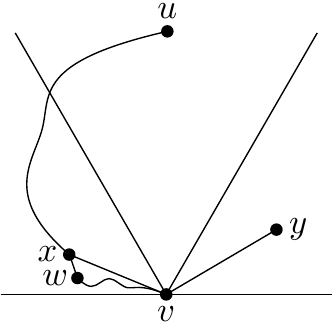}
		} \hspace{3em}
		\subfloat[After $(x,y)$ is added and $(v,y)$ is removed.]{
			\includegraphics{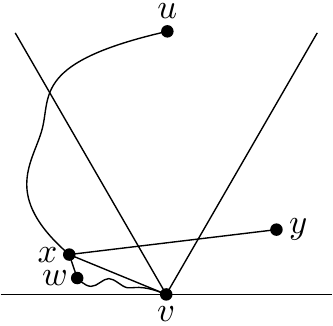}
		} \hspace{3em}
		\subfloat[The situation after the transformation.]{
			\includegraphics{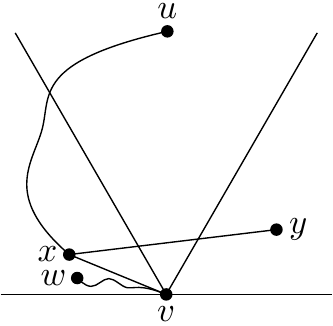}
		}
		\caption{Transforming $G_{10}$ into $G_7$.}
		\label{fig:G7}
	\end{figure}

The reasons for these choices are as follows. If an edge $(a,b)$ is part of a canonical path and $b$ is the closest vertex to $a$ in the negative subcone that contains it, then this edge is charged twice at $a$. It is charged once as the edge to the closest vertex (Scenario B) and once as part of the canonical path (Scenario C). Hence, we can reduce the charge to $a$ by 1 without changing the graph. 

In the case where we end up removing $(v,y)$, we need to ensure that there still exists a spanning path between $u$ and $y$ (and any vertex following $y$ on the canonical path). Therefore, we insert the edge $(x,y)$. This comes at the price that we now need to ensure that the total charge of $x$ does not increase. Fortunately, this is always possible. 

\begin{lemma}
	The degree of every vertex in $G_7$ is at most $7$.
\end{lemma}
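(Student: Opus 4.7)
I recall from Lemma~\ref{lem:G10Degree} that in $G_{10}$ every vertex has $3$ positive cones and at most $4$ negative subcones, with each positive cone (if not split by an obstacle) charged at most $2$ via Scenario C, each split positive subcone charged at most $1$, and each negative subcone charged at most $1$. The only way a vertex reaches degree $10$ is via unsplit positive cones contributing charge $2$. My plan is to show that the transformation producing $G_7$, supplemented by a refined accounting rule for one special configuration, reduces the charge on every unsplit positive cone from $2$ to $1$ while leaving all other charge bounds intact. This yields a total of $3 \cdot 1 + 4 \cdot 1 = 7$.

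Fix a vertex $v$ whose unsplit positive cone $C^v_i$ is charged twice in $G_{10}$ by consecutive canonical-path edges $(x,v)$ and $(v,y)$ in $\overline{C^u_i}$ for some $u \in C^v_i$. \textbf{Case A:} at least one of $x$ or $y$ (say $y \in \overline{C^v_{i+1}}$) is the closest vertex of $v$ in its negative subcone. Then the single edge $(v,y)$ is charged twice at $v$: once to $C^v_i$ via Scenario C and once to $\overline{C^v_{i+1}}$ via Scenario B. Since one edge contributes only once to $v$'s actual degree, I account for $(v,y)$ under Scenario B only, so the effective charge on $C^v_i$ drops to $1$ without modifying the graph. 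A symmetric argument works if $x$ is closest.

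\textbf{Case B:} neither $x$ nor $y$ is closest of $v$ in its subcone. The transformation deletes $(v,y)$ and inserts the shortcut $(x,y)$. At $v$, the degree drops by one and $C^v_i$ is only charged by $(x,v)$, so its Scenario C count falls to $1$. At $y$, one edge is swapped for another, leaving the total degree unchanged; I would verify that $(x,y)$ lies in the same positive cone of $y$ as $(v,y)$ did, so the charge pattern at $y$ is preserved. At $x$, the new edge $(x,y)$ would raise the degree by $1$, which is compensated by removing $(x,w)$, where $w$ is the neighbor of $x$ on the canonical path, provided $w \in \overline{C^x_i}$ and $w$ is not the closest vertex of $x$ in its subcone; if $w$ fails these conditions, the Case A accounting already applies at $x$ and no removal is necessary.

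The main technical obstacle is verifying Case B: that the addition of $(x,y)$ and the removal of $(x,w)$ redirect charge within a single cone at $x$, rather than creating new load. For this, I would use the planarity of $G_{10}$ (Lemma~\ref{lem:G10Plane}) together with the convex-chain geometry underlying canonical paths (Lemmas~\ref{lem:lem1} and~\ref{lem:canonpathempty}) to argue that both $(x,y)$ and $(x,w)$ lie in the same cone at $x$, so the transformation merely relabels a charge rather than incurring a new one. I must also check that simultaneous transformations at distinct positive cones across different vertices do not conflict: an edge removed by one transformation cannot be indispensable to another, since the defining hypothesis ``not the closest in its subcone'' prevents the relevant edges from coinciding. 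Once these points are settled, every positive cone of every vertex in $G_7$ carries charge at most $1$ and every negative subcone carries charge at most $1$, so the total degree is at most $3 + 4 = 7$.
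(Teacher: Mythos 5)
Your proposal follows essentially the same route as the paper's proof: the same split into the case where one of $x$, $y$ is the closest vertex in its negative subcone of $v$ (drop the duplicate Scenario~B/C charge without modifying the graph) and the case where neither is (remove $(v,y)$, add $(x,y)$, and compensate at $x$ via the neighbor $w$, either by exploiting a double-counted charge or by removing $(x,w)$). The details you flag as ``to be verified'' (the cone placement of $(x,y)$ at $y$ and $x$, and non-interference of simultaneous transformations) are handled in the paper by the same charge-relabelling bookkeeping you describe, so your argument is correct and not a different approach.
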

\begin{proof}
	Since each vertex $v$ is part of at most one obstacle, $v$ can have at most $7$ subcones. Similar to $G_{10}$, we prove the degree bound of $G_7$ by demonstrating that the charge of each subcone after the transformation is at most $1$. Since we showed in Lemma~\ref{lem:G10Degree} that every negative subcone already has a charge of at most $1$ in $G_{10}$, we only need to prove that the transformation to construct $G_7$ reduces the maximum charge of each positive cone (not split by an obstacle) from $2$ to $1$. 
	
	Let $u$ be a vertex where $(u,v) \in G_\infty$ and $C^v_i$ is a positive cone that has $2$ edge charges in $G_{10}$. Let $x\in\overline{C^v_{i-1}}$ and $y\in\overline{C^v_{i+1}}$ where $(x,v)$ and $(v,y)$ are two edges in the canonical path of $\overline{C^{u}_i}$ (see Figure~\ref{fig:G7}a). If $x$ is the closest vertex of $v$ in $\overline{C^v_{i-1}}$, the edge $(x,v)$ is counted in both $\overline{C^v_{i-1}}$ (Scenario B) and $C^v_i$ (Scenario C). Then by simply removing the charge of $(x,v)$ in $C^v_i$ we reduce the charge of this positive cone to $1$ whilst maintaining that the total charge of a vertex is an upper bound on the degree of that vertex. Analogously, if $(v,y)$ is the closest vertex, it is counted twice at $v$. If neither $(x,v)$ nor $(v,y)$ is doubly counted and assuming, without loss of generality, that $x$ occurs on the canonical path from $u$ to $y$, we remove one of the two charges of $C^v_i$ by removing the edge $(v,y)$. Cone $C^v_i$ now has charge 1 as required. 
	
	If edge $(v,y)$ was removed, we then add the edge $(x,y)$ in the transformation. For vertex $y$, edge $(v,y)$ is removed and edge $(x,y)$ is added, thus the degree of $y$ remains unchanged. Charging $(x,y)$ to the subcone $C^y_{i+1}$ (the one that $(v,y)$ used to be charged to) will maintain the charges as well, ensuring that it is still an upper bound on the degree. 
	
	Next, we consider vertex $x$ whose degree increased when $(x,y)$ was added. Let $w$ be the neighbor of $x$ on the canonical path of $\overline{C^v_{i-1}}$. If $w$ is the closest vertex of $x$ in subcone $\overline{C^x_i}$, $(x,w)$ is counted twice: once in $C^x_{i-1}$ and once in $\overline{C^x_i}$. Removing the charge of $(x,w)$ from $C^x_{i-1}$ allows us to now charge $(x,y)$ to this cone, leading to the desired charges. If $(x,w)$ is not the closest vertex of $x$ in subcone $\overline{C^x_i}$, however, the last step of the transformation is performed and $(x,w)$ is removed. For vertex $x$, this maintains the degree, as edge $(x,y)$ is added and edge $(x,w)$ is removed. Edge $(x,y)$ is charged to the subcone that used to be charged for the removed edge $(u,w)$ and thus the charging bound of $\overline{C^x_i}$ remains unchanged. Finally, vertex $w$ only has an edge removed and thus the charge for $(x,w)$ is removed, resulting in its degree and charge to both be reduced by $1$. This completes the proof of the lemma.
\end{proof}

\begin{lemma}\label{lem:G7spanning}
	$G_7$ is a $3$-spanner of \graphname.
\end{lemma}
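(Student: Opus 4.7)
The plan is to take the canonical $3$-spanning path in $G_{10}$ guaranteed by Lemma~\ref{lem:G10Spanning} and show that it can be adapted to a path in $G_7$ of no greater length. Given an edge $(u, v) \in G_\infty$ with $v \in \overline{C^u_0}$ without loss of generality, let $u, v_0, v_1, \dots, v_k = v$ denote the $G_{10}$ canonical-path route, which has total length at most $3|uv|$. I would construct a path $P'$ in $G_7$ by skipping every vertex $v_j$ at which the $G_7$-transformation fires: wherever the transformation removes $(v_j, v_{j+1})$ and inserts the shortcut $(v_{j-1}, v_{j+1})$, I replace the subpath $v_{j-1}, v_j, v_{j+1}$ of the canonical route by the single edge $(v_{j-1}, v_{j+1})$. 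The triangle inequality gives $|v_{j-1} v_{j+1}| \leq |v_{j-1} v_j| + |v_j v_{j+1}|$, so every such skip only shortens the path, and therefore $|P'| \leq 3|uv|$.

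The next step is to verify that $P'$ actually lies in $G_7$. The unmodified canonical-path edges are inherited from $G_{10}$ (and, ultimately, from Lemma~\ref{lem:canonpath}), the shortcut edges $(v_{j-1}, v_{j+1})$ are explicitly inserted by the transformation, and the primary removals $(v_j, v_{j+1})$ are never needed by $P'$ because the corresponding $v_j$ is skipped. The closest-vertex edge $(u, v_0)$ is preserved unconditionally as the Scenario~B edge of $\overline{C^u_0}$, so the initial hop of $P'$ also lies in $G_7$.

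The main obstacle will be handling the secondary removal $(v_{j-1}, v_{j-2})$ that may also be performed when the transformation fires at $v_j$. I would argue that the preconditions triggering this secondary removal --- namely $v_{j-2} \in \overline{C^{v_{j-1}}_0}$ and $v_{j-2}$ failing to be the closest vertex of $v_{j-1}$ in that subcone --- force a matching shortcut or transformation at or around $v_{j-2}$, so that $P'$ already skips $v_{j-2}$ via an edge of the form $(v_{j-3}, v_{j-1})$ and never relies on the removed $(v_{j-1}, v_{j-2})$. Establishing this consistency by a careful case analysis, using the planarity of $G_\infty$ (Lemma~\ref{lem:lem3}) together with the empty-triangle property of consecutive canonical-path vertices (Lemma~\ref{lem:canonpathempty}) to rule out interfering vertices or obstacles, is where the real work lies; once it is in place, the $3$-spanning bound follows immediately from the length estimate above.
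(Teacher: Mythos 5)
Your treatment of the primary removal matches the paper's: replace the two hops $v_{j-1},v_j,v_{j+1}$ of the canonical path of $u$ by the inserted shortcut $(v_{j-1},v_{j+1})$ and apply the triangle inequality. But there are two genuine gaps. First, you have misread the secondary removal. In the paper's notation the second edge removed is $(x,w)$, where $w$ is the neighbour of $x$ on the canonical path of \emph{$v$} in $\overline{C^v_{i-1}}$ --- a vertex lying in $\overline{C^x_i}$, on an entirely different canonical path --- not an edge $(v_{j-1},v_{j-2})$ of the canonical path of $u$. So the consistency argument you plan (that $P'$ already skips $v_{j-2}$) is aimed at a removal that never occurs, while the removal that does occur severs the canonical path anchored at $v$; handling it requires showing that $x$ is the \emph{last} vertex on that path (the paper does this via Lemma~\ref{lem:canonpathempty} and the planarity of \graphname), so that only the pairs $(v,x)$ and $(x,w)$ are affected, and the former still has the direct edge $(x,v)$.

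Second, to be a $3$-spanner of \graphname, $G_7$ must contain a $3$-spanning path for \emph{every} edge of \graphname, and the removed edges $(v,y)$ and $(x,w)$ are themselves edges of \graphname (they lie on canonical paths and hence are in \graphname by Lemma~\ref{lem:canonpath}). Your path $P'$ certifies only the edge $(u,v)$; you never exhibit replacement paths for $(v,y)$ and $(x,w)$ directly. The paper closes this by observing that, by the precondition of each removal, $y$ is not the closest vertex of $v$ in $\overline{C^v_{i+1}}$ and $w$ is not the closest vertex of $x$ in $\overline{C^x_i}$, so the canonical paths of those subcones (which survive into $G_7$) still supply $3$-spanning paths for these two edges. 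Without both of these pieces the argument does not go through.
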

\begin{proof}
	For this lemma to stand, we have to prove that for any edge $(a,b)$ in \graphname, there exists a path from $a$ to $b$ in $G_7$ where the length of the path $\delta(a,b)\leq3\cdot|(a,b)|$. Since $G_{10}$ is proven to be a $3$-spanner of \graphname in Lemma~\ref{lem:G10Spanning}, it follows that there exists a spanning path between any pair of vertices in $G_7$ except between those affected during the removal of edges. Therefore, we focus only on the paths affected by the transformation. We will use the same notation as in the $G_7$ construction method. Hence, there are two edges potentially removed during the transformation from $G_{10}$ to $G_7$: $(v,y)$ and $(x,w)$. 
	
	The removal of $(v,y)$ interrupts the path between $x$ and $y$ along the canonical path of $\overline{C^{u}_0}$ and hence the paths from $u$ to any vertices on the canonical path after $y$. The original spanning path in $G_{10}$ includes $(x,v)$ and $(v,y)$. When removing $(v,y)$, we also add edge $(x,y)$, so as the new spanning path in $G_7$, we consider the same path as that in $G_{10}$ where $(x,v)$ and $(v,y)$ are replaced by $(x,y)$. By triangle inequality, $|xy|\leq|xv|+|vy|$, therefore the spanning ratio of the spanning path from $u$ to $y$ and any vertices after $y$ is still at most $3$. 
	
	Removing edge $(v,y)$ also affects the spanning ratio between $v$ and $y$. By construction, however, $(v,y)$ is only removed if $y$ is not the closest visible vertex of $v$ in $\overline{C^v_1}$. Hence, there exists a canonical path between $v$ and $y$ in $\overline{C^v_1}$. According to Lemma~\ref{lem:G10Spanning}, the length of this canonical path is at most $3\cdot|vy|$ and by the preceding paragraph this spanning ratio is maintained. Thus, there still exists a spanning path between $v$ and $y$. 
	
	Next, we consider the effect of removing edge $(x,w)$. By construction, if it is removed, $w$ is not the closest vertex to $x$ in $\overline{C^{x}_0}$. We first show by contradiction that $x$ is the end of the canonical path of $\overline{C^v_2}$ and thus has only one neighbour, $w$, on this path. Let vertex $z$ be the other neighbor of $x$ along the canonical path of $\overline{C^v_2}$. Consider the edge $(z,v) \in G_\infty$. This edge either intersects the boundary of $\triangle uxv$ or $z$ lies in its interior. However, by Lemma~\ref{lem:canonpathempty}, $z$ cannot lie inside and by Lemma~\ref{lem:lem3} $G_\infty$ is plane. Hence, $z$ cannot exist and thus $x$ is the last vertex on the canonical path of $\overline{C^v_2}$. Since there are no further edges on the canonical path of $\overline{C^v_2}$, when $(x,w)$ is removed, the only paths affected are the one between $x$ and $w$ and the canonical path from $v$ to $x$. 
	
	Since $w$ is not the closest vertex to $x$ in $\overline{C^x_0}$, there exists a canonical path of $\overline{C^{x}_0}$ from $x$ to $w$. As stated in Lemma~\ref{lem:G10Spanning}, this path is a $3$-spanning path of $(x,w)$. Therefore, there exists a spanning path between $x$ and $w$. 
		
	Once $(x,w)$ is removed, the path from $x$ to $v$ via the canonical path of $\overline{C^v_2}$ no longer exists. However, since the edge $(x,v)$ is not removed in the transformation, there still exists a 1-spanning path between $x$ and $v$. 
	
	Since for all affected paths, there is still a spanning path with length at most $3$ times the length of the original path in $G_\infty$, $G_7$ is a 3-spanner of \graphname.
\end{proof}

It remains to show that $G_7$ is a plane graph.

\begin{lemma}
	$G_7$ is a plane graph.
\end{lemma}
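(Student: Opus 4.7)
The plan is to treat the construction of $G_7$ from $G_{10}$ as a sequence of edge removals and additions. Edge removals preserve planarity, and $G_{10}$ is plane by Lemma~\ref{lem:G10Plane}, so the task reduces to verifying that each newly added edge $(x,y)$ crosses no other edge.

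Each such added edge arises from three consecutive canonical-path vertices $x, v, y$ in some $\overline{C^u_i}$, with $x \in \overline{C^v_{i-1}}$, $y \in \overline{C^v_{i+1}}$, and $u \in C^v_i$; I take $i = 0$ without loss of generality. The central geometric step I would establish is that the quadrilateral $u, x, v, y$ (in this cyclic order) is convex, which reduces to showing that $u$ and $v$ lie on opposite sides of line $xy$. The placements $x \in \overline{C^v_2}$ and $y \in \overline{C^v_1}$ force $x$ and $y$ to lie at or above $v$'s horizontal, so line $xy$ passes above $v$ at $v$'s $x$-coordinate. Conversely, $x, y \in \overline{C^u_0}$ gives the $30^{\circ}$ half-angle bounds $u_y - x_y \geq \sqrt{3}\,|u_x - x_x|$ and $u_y - y_y \geq \sqrt{3}\,|u_x - y_x|$; combined with the bound $|\text{slope}(xy)| < \sqrt{3}$ forced by the angular placement of $x$ and $y$ around $v$, these place $u$ strictly above line $xy$ at $u_x$, even when $u_x$ lies outside $[x_x, y_x]$. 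Thus $u$ is above and $v$ below line $xy$, the diagonals $uv$ and $xy$ cross, and $uxvy$ is convex.

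From convexity, $\triangle xvy \subseteq \triangle uxv \cup \triangle uvy$, and Lemma~\ref{lem:canonpathempty} makes both of these triangles empty of vertices and obstacles. Hence $\triangle xvy$ is empty, so $(x,y)$ is a visibility edge, and no edge of $G_{10}$ crosses $(x,y)$: any such crossing would require an endpoint inside $\triangle xvy$ (a straight segment can neither enter and leave $\triangle xvy$ through $(x,y)$ alone, nor cross $(x,v)$ or $(v,y)$ by planarity of $G_{10}$), contradicting emptiness.

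The main obstacle I anticipate is handling multiple transformations applied along the same canonical path, since two freshly added shortcuts $(v_{j-1}, v_{j+1})$ and $(v_j, v_{j+2})$ could in principle cross. The precondition that neither $x$ nor $y$ be the closest vertex of $v$ in its respective subcone restricts when the transformation triggers, and the third step of the transformation (removing $(x,w)$ when $w$ is not the closest vertex in the relevant subcone of $x$) is designed precisely to prune the canonical-path edges that would otherwise cause such conflicts, leaving $G_7$ plane.
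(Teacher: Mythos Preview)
Your overall strategy matches the paper's: reduce to showing each added edge $(x,y)$ crosses nothing, using that it sits inside the quadrilateral $uxvy$ whose two cells $\triangle uxv$ and $\triangle uvy$ are empty by Lemma~\ref{lem:canonpathempty}. The convexity computation is correct (the paper does not need it explicitly, but it does no harm).

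There are two genuine gaps.

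First, your argument that a $G_{10}$ edge crossing $(x,y)$ must have an endpoint in the open interior of $\triangle xvy$ overlooks edges incident to the corner $v$. An edge from $v$ into $C^v_0$ crosses $(x,y)$ while neither entering the open interior nor crossing $(x,v)$ or $(v,y)$. The only such candidate is $(u,v)\in G_\infty$, and the paper disposes of it by observing that $x$ and $y$ are both closer to $u$ than $v$ in $\overline{C^u_i}$, so $(u,v)\notin G_{10}$. You need this step.

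Second, and more seriously, your treatment of two added edges crossing is not an argument. You attribute non-crossing to the third transformation step (removing $(x,w)$), but that step deletes an edge on \emph{$v$'s} canonical path, not $u$'s; it is there for the degree bound and plays no role in planarity. The paper's actual argument is different and is the heart of the lemma: suppose an added edge $e$ crosses $(x,y)$. By the first part (applied to $e$), $e$ cannot cross the $G_{10}$ edges $(x,v)$ and $(v,y)$; together with emptiness of $\triangle uxv\cup\triangle uvy$ this forces $v$ to be an endpoint of $e$. But then the middle vertex of $e$'s transformation is adjacent to $v$ along a canonical path with $v$ in one of its adjacent negative subcones, which forces that middle vertex to be $x$ or $y$; this contradicts the placement $v\in C^x_{i-1}$, $v\in C^y_{i+1}$ that the addition of $(x,y)$ already imposed. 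Your last paragraph should be replaced by an argument of this kind.
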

\begin{proof}
	By construction, each transformation performed when converting $G_{10}$ to $G_7$ includes the addition of at most one edge $(x,y)$ and the removal of at most two edges $(v,y)$ and $(x,w)$. By Lemma~\ref{lem:G10Plane}, $G_{10}$ is a plane graph and thus the removal of $(v,y)$ and $(x,w)$ does not affect the planarity of $G_7$. It remains to consider the added edges. Specifically, we need to show that: $(x,y)$ cannot intersect any edge from $G_{10}$, $(x,y)$ cannot intersect an obstacle, and $(x,y)$ cannot intersect another edge added during the construction of $G_7$. 
	
	We first prove that the edge $(x,y)$ cannot intersect any edge from $G_{10}$. Since $x$, $v$, and $y$ are consecutive vertices in the canonical sequence of a negative subcone of $u$,  according to Lemma~\ref{lem:canonpathempty}, $\triangle uxv$ and $\triangle uvy$ are empty. Furthermore, since the boundary edges of these triangles are edges in $G_\infty$, which is plane by Lemma~\ref{lem:lem3}, no edges or obstacles can intersect these edges. Hence, since $(x,y)$ lies inside the quadrilateral $uxvy$, edge $(u,v)$ is the only edge that can intersect $(x,y)$. However, since $x$ and $y$ lie in $\overline{C^v_{i-1}}$ and $\overline{C^v_{i+1}}$, both $x$ and $y$ are closer to $u$ than $v$ in $\overline{C^{u}_i}$, implying that $(u,v)$ is not part of $G_{10}$. Therefore, $(x,y)$ does not intersect any edge from $G_{10}$. 
	
	Next, we show that $(x,y)$ cannot intersect any obstacles. Since by Lemma~\ref{lem:canonpathempty} no obstacles, vertices, or edges can pass through or exist within $\triangle uxv$ and $\triangle uvy$, the only obstacles we need to consider are $\triangle uxv$ and $\triangle uvy$ themselves and the line segment obstacle $uv$. We note that in all three cases the obstacle splits the negative cone $\overline{C^{u}_i}$ into two subcones, and $x$ and $y$ lie in different negative subcones and thus on different canonical paths (see Figure~\ref{fig:G7_plane}). This implies that $C^u_i$ is not charged twice by the same canonical path and thus, the transformation to $G_7$ would not add $(x,y)$. Therefore, $(x,y)$ cannot intersect any obstacles. 
	
	\begin{figure}[h!]
		\centering
		\includegraphics{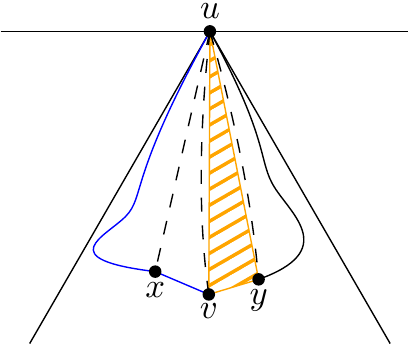}
		\caption{The quadrilateral $uxvy$ in $G_7$ where $\triangle uvy$ is an obstacle itself. $\overline{C^u_0}$ is split into two subcones and $x$ and $y$ lie on two different canonical paths.}
		\label{fig:G7_plane}
	\end{figure}
	
	Finally, we show that the added edge $(x,y)$ cannot intersect another edge added during the transformation from $G_{10}$ to $G_7$. We prove this by contradiction. Let $e$ be an added edge that intersects $(x,y)$. Edge $e$ cannot have $x$ or $y$ as an endpoint, as this would imply that it does not intersect $(x,y)$. Furthermore, since by Lemma~\ref{lem:canonpathempty} $\triangle uxv$ and $\triangle uvy$ are empty and $e$ cannot intersect edges $(x,v)$ and $(v,y)$, as shown above, $v$ is one of the endpoints of $e$. By construction of $G_7$, $e$ is added because either $x$ or $y$ has a positive cone charged by two edges from its adjacent negative subcones (one of which being the edge to $v$). However, this contradicts the fact that $v$ lies in their positive cones ($C^x_{i-1}$ and $C^y_{i+1}$) implied by the edge $(x,y)$ being added during the transformation. Therefore, $(x,y)$ cannot intersect any edge added while constructing $G_7$, completing the proof. 
\end{proof}

We summarize our main result in the following theorem. 

\begin{theorem}
	$G_7$ is a plane $6$-spanner of the visibility graph of degree at most 7. 
\end{theorem}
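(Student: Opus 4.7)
The plan is to derive this theorem as an immediate corollary of the three lemmas just established in this section, combined with the earlier Theorem stating that $G_\infty$ is a plane $2$-spanner of the visibility graph. All of the substantive work has already been done; what remains is a short compilation argument together with one routine composition of spanning ratios.

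First I would dispatch the two easy properties: planarity of $G_7$ follows directly from the lemma asserting that $G_7$ is plane, and the degree bound follows directly from the lemma asserting that every vertex of $G_7$ has degree at most $7$. Neither requires any further work.

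For the spanning ratio, the plan is to chain Lemma~\ref{lem:G7spanning}, which gives that $G_7$ is a $3$-spanner of $G_\infty$, with the earlier theorem that $G_\infty$ is a $2$-spanner of the visibility graph. Concretely, for any two visible vertices $u,v$, the visibility edge $(u,v)$ has a path $\pi$ in $G_\infty$ of total length at most $2\cdot|uv|$. Each edge $(a,b)$ along $\pi$ is itself an edge of $G_\infty$, so Lemma~\ref{lem:G7spanning} provides a $u$-$v$ walk in $G_7$ obtained by replacing each such $(a,b)$ with a path of length at most $3\cdot|ab|$. Summing over the edges of $\pi$, the total length of the resulting walk in $G_7$ is at most $3$ times the length of $\pi$, and hence at most $6\cdot|uv|$. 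Since $d_{G_7}(u,v)$ is bounded by the length of this walk, $G_7$ is a $6$-spanner of the visibility graph.

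There is no real obstacle here; the only point to be careful about is that the $3$-spanner guarantee of Lemma~\ref{lem:G7spanning} is stated per edge of $G_\infty$, and we must apply it edge-by-edge along the $G_\infty$-path that witnesses the $2$-spanner property, rather than directly to the edge $(u,v)$ (which need not be in $G_\infty$). Once that is observed, the multiplicative composition $2\cdot 3 = 6$ yields the claimed spanning ratio and completes the proof.
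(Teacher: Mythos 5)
Your proposal is correct and matches the paper's (implicit) argument exactly: the theorem is obtained by combining the planarity and degree lemmas for $G_7$ with the composition of the $3$-spanner guarantee of Lemma~\ref{lem:G7spanning} and the $2$-spanner property of $G_\infty$, just as the paper does for $G_{15}$ and $G_{10}$. Your edge-by-edge care in applying the $3$-spanner bound along the $G_\infty$-path is the right (and standard) way to justify the multiplicative composition $2\cdot 3=6$.
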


\section{Conclusion}
Our goal was to expand upon the constrained half-$\Theta_6$-graph suggested by Bose~\etal~\cite{basepaper} supporting more complex obstacles in the form of convex polygons, as opposed to line segments obstacles, whilst bounding the degrees of the resulting spanners. We presented a plane spanning graph \graphname of the visibility graph with spanning ratio $2$ that avoids polygonal obstacles. In addition, we presented three bounded-degree 6-spanners. $G_{15}$ is constructed by removing edges in the negative subcones and results in a graph of degree 15. When we allow vertices to require the presence of edges between their neighbors, we can build $G_{10}$, reducing the degree bound to $10$. Finally, if we are also allowed to add edges that are not part of $G_\infty$, we showed how to construct $G_7$, a plane $6$-spanner of the visibility graph of degree $7$.

With the construction of these spanners, future work includes developing routing algorithms for them. There has been extensive research into routing algorithms, including routing on the visibility graph in the presence of line segment obstacles~\cite{BFRV2017RoutingJournal,BKRV2017Routing,BKRV2018RoutingVisibilityGraphJournal}. For polygonal obstacles, Banyassady~\etal~\cite{BKMRRSSVW2017Routing} developed a routing algorithm on the visibility graph with polygonal obstacles, though this work requires some additional information to be stored at the vertices. 

Another logical next step is to reduce the degree bound further. Without obstacles, Bonichon~\etal~\cite{Bonichon2015} constructed a plane 156.82-spanner of degree 4. Kanj~\etal~\cite{DBLP:journals/corr/KanjPT16} improved on this by constructing a plane $20$-spanner with maximum degree $4$. Extending this to the setting with obstacles would be an interesting improvement on this paper. Additionally, Biniaz~\etal~\cite{DBLP:journals/corr/BiniazBCGMS16} showed how to construct plane spanners of degree $3$ in two special cases. 

Finally, it would be interesting to see if different constructions can provide bounded-degree spanners with smaller spanning ratios, even at the cost of increasing the degree bound slightly. Recently, Bose~\etal~\cite{Bose2018} showed how to construct an (approximately) $4.414$-spanner of degree 8, though not in the presence of obstacles.

\bibliographystyle{plainurl}
\bibliography{references.bib}

\end{document}